\newtheorem{proposition}{Proposition}
\newtheorem{theorem}{Theorem}
\newtheorem{lemma}{Lemma}
\newtheorem{corollary}{Corollary}
\newproof{proof}{Proof}
\def\Left{\ensuremath{\mathrm{Left}}}
\def\Right{\ensuremath{\mathrm{Right}}}
\def\First{\ensuremath{\mathrm{First}}}
\def\balpha{{\boldsymbol{\alpha}}}
\def\bbeta{{\boldsymbol{\beta}}}
\let\doendproof\endproof
\renewcommand\endproof{~\hfill~$\Box$\doendproof}
\definecolor{dkgreen}{rgb}{0,0.6,0}
\definecolor{gray}{rgb}{0.5,0.5,0.5}
\definecolor{mauve}{rgb}{0.58,0,0.82}
\lstdefinestyle{myScalastyle}{
	frame=tb,
	language=scala,
	aboveskip=3mm,
	belowskip=3mm,
	showstringspaces=false,
	columns=flexible,
	basicstyle={\small\ttfamily},
	numbers=none,
	numberstyle=\tiny\color{gray},
	keywordstyle=\color{blue},
	commentstyle=\color{dkgreen},
	stringstyle=\color{mauve},
	frame=single,
	breaklines=true,
	breakatwhitespace=true,
	tabsize=3,
	numbers=left,
	deletekeywords={for},
	otherkeywords={function,==,=, Let}
}
\begin{document}
\title{The ordinal generated by an ordinal grammar is computable}
		\author[kitti]{Kitti~Gelle}
		\author[szabivan]{Szabolcs~Iv\'an}
		\address[kitti,szabivan]{University of Szeged, Hungary}
		\tnotetext[thanksfn]{
			Ministry of Human Capacities, Hungary grant 20391-3/2018/FEKUSTRAT is acknowledged.
			Szabolcs Iv\'an was supported by the J\'anos Bolyai Scholarship of the Hungarian Academy of Sciences.} 
		
\begin{abstract}
A prefix grammar is a context-free grammar whose nonterminals generate prefix-free languages.
A prefix grammar $G$ is an ordinal grammar if the language $L(G)$ is well-ordered with respect to the lexicographic ordering.
It is known that from a finite system of parametric fixed point equations over ordinals one can construct an ordinal grammar $G$
such that the lexicographic order of $G$ is isomorphic with the least solution of the system, if this solution is well-ordered.
In this paper we show that given an ordinal grammar, one can compute (the Cantor normal form of)
the order type of the lexicographic order of its language, yielding that least solutions of fixed point equation systems
defining algebraic ordinals are effectively computable (and thus, their isomorphism problem is also decidable).
\end{abstract}
\begin{keyword}
	Algebraic ordinals; Ordinal grammars; Parametric fixed-point equations over ordinals; Isomorphism of algebraic well-orderings
\end{keyword}
		
\maketitle
\section{Introduction}

Least solutions of finite systems of fixed points equations occur frequently in computer science.
Some very well-known instances of this are the regular and context-free languages, rational and algebraic power
series, well-founded semantics of generalized logic programs, semantics of functional programs, just to name
a few. A perhaps less-known instance is the notion of the algebraic linear orders of~\cite{Bloom:2010:MTC:1655414.1655555}.
A linear ordering is algebraic if it is (isomorphic to) the first component of the least solution of a finite system
of fixed point equations of the sort
\[F_i(x_0,\ldots,x_{n_i-1})=t_i,\quad i=1,\ldots,n,\]
where $n_1=0$ and each $t_i$ is an expression composed of the function variables $F_j$, $j=1,\ldots,n$,
the variables $x_0,\ldots,x_{n_i-1}$ which range over linear orders, the constant $1$ and the sum operation $+$.
As an example, consider the following system from~\cite{DBLP:journals/fuin/BloomE10}:
\begin{align*}
F_0 &= G(1)\\
G(x)&=x+G(F(x))\\
F(x)&=x+F(x)
\end{align*}
In this system, the function $F$ maps a linear order $x$ to $x+x+\ldots = x\times\omega$,
the function $G$ maps a linear order $x$ to $x+G(x\times \omega)=x+x\times\omega+G(x\times\omega^2)+\ldots
=x\times\omega^\omega$, thus the first component of the least solution of the system is $F_0=G(1)=\omega^\omega$.

If the system in question is parameterless, that is, $n_i=0$ for each $i$, then the ordering which it defines 
is called a regular ordering. An ordinal is called algebraic (regular, respectively) if it is algebraic (regular, resp.)
as a linear order. It is known~\cite{ITA_1980__14_2_131_0,BLOOM2001533,10.1007/978-3-540-73859-6_1,DBLP:journals/fuin/BloomE10,10.1007/978-3-642-29344-3_25}
that an ordinal is regular if and only if it is smaller than $\omega^\omega$ and is algebraic if and only if
it is smaller than $\omega^{\omega^\omega}$.

To prove the latter statement, the authors of~\cite{DBLP:journals/fuin/BloomE10} applied a path first used
by Courcelle~\cite{ITA_1978__12_4_319_0}: every countable linear order is isomorphic to the frontier
of some (possibly) infinite (say, binary) tree. Frontiers of infinite binary trees in turn
correspond to prefix-free languages over the binary alphabet, equipped with the lexicographic ordering.
Moreover, algebraic (regular, resp.) ordinals are exactly the lexicographic orderings of
context-free (regular, resp.) prefix-free languages~\cite{COURCELLE198395}
(prefix-free being optional here as each language
can be effectively transformed to a prefix-free order-isomorphic one for both the regular and the
algebraic case). Thus, studying lexicographic orderings of prefix-free regular or context-free languages
can give insight to regular or algebraic linear orders. The works~\cite{BLOOM2001533,6a0baa0d4e6744d38956d22057b410ce,BLOOM200555,10.1007/978-3-540-73859-6_1,
COURCELLE198395,ITA_1980__14_2_131_0,LOHREY201371,ITA_1986__20_4_371_0} deal with regular linear orders this way, in particular~\cite{LOHREY201371}
shows that the isomorphism problem for regular linear orders is decidable in polynomial time
The study of the context-free case was initiated in~\cite{10.1007/978-3-540-73859-6_1}, and further developed
in~\cite{DBLP:journals/fuin/BloomE10, doi:10.1142/S0129054111008155,ESIK2011107,10.1007/978-3-642-22321-1_19,10.1007/978-3-642-29344-3_25,CARAYOL2013285,KUSKE201446}.

Highlighting the results from these works that are tightly connected to the current paper: the case of
regular linear orders is well-understood, even their isomorphism problem (that is, whether
two regular linear orders, given by two finite sets of fixed-point equations, are isomorphic) 
is decidable. For algebraic linear orders, there are negative results: it is already undecidable whether
an algebraic linear ordering is dense, thus (as there are exactly four dense countable linear orders
up to isomorphism) the isomorphism problem of algebraic linear orders is undecidable.
On the other hand, deciding whether an algebraic linear order is scattered, or a well-order, is decidable.
The frontier of decidability of the isomorphism problem of algebraic linear orderings is an interesting
question: for the general case it is undecidable, while for the case of regular ordinals it is
known to be decidable by~\cite{LOHREY201371} and~\cite{Khoussainov:2005:ALO:1094622.1094625}.
In~\cite{DBLP:journals/fuin/BloomE10}, it was shown that a system of equations defining an algebraic ordering
can be effectively transformed (in polynomial time) to a so-called prefix grammar $G$
(a context-free grammar whose nonterminals each generate a prefix-free language), 
such that the lexicographic order of the language generated by $G$ is isomorphic to the algebraic ordering in question.
If the ordering is a well-ordering (i.e. the system defines an algebraic ordinal),
then the grammar we get is called an ordinal grammar, that is, a prefix
grammar generating a well-ordered language with respect to the lexicographic ordering.

In this paper we show that given an ordinal grammar, the order type of the lexicographic ordering of the
language it generates is computable (that is, we can effectively construct its Cantor normal form).
Hence, applying the above transformation we get that the Cantor normal form of any algebraic ordinal
is computable from its fixed-point system presentation, thus in particular, the isomorphism problem
of algebraic ordinals is decidable.

\section{Notation}	
When $n\geq 0$ is an integer, $[n]$ denotes the set $\{1,\ldots,n\}$. (Thus, $[0]$ is another notation for the empty set $\emptyset$.)
\subsection*{Linear orders, ordinals}

In this paper we consider countable linear orderings. A good reference on the topic is~\cite{rosenstein}.
A linear ordering $(I,<)$ is a set $I$ equipped with a strict linear order: an irreflexive, transitive and trichotome relation $<$.
When the order $<$ is clear from the context, we omit it. Set-theoretic properties of $I$ are lifted to $(I,<)$, thus we can say that
a linear order is finite, countable etc.
When $(I_1,<_1)$ and $(I_2,<_2)$ are linear orders, their (ordered) sum is $(I_1,<_1)+(I_2,<_2)=(I_1\uplus I_2,<)$ with
$x<y$ if and only if either $x\in I_1$ and $y\in I_2$, or $x,y\in I_1$ and $x<_1y$, or $x,y\in I_2$ and $x<_2y$.
A linear ordering $(I',<')$ is a subordering of $(I,<)$ if $I'\subseteq I$ and $<'$ is the restriction of $<$ onto $I'$.
In order to ease notation, we usually use $<$ in these cases in place of $<'$ and so we will simply write $(I_1,<)+(I_2,<)=(I,<)$
or even $I_1+I_2=I$ in the case of sums.

A linear ordering $I$ is called a \emph{well-ordering} if there are no infinite descending chains $\ldots<x_2<x_1<x_0$ in $I$. 
Clearly, well-orderings are closed under (finite) sums and suborderings,
and they are also closed under $\omega$-sums: if $I_1,I_2,\ldots$ are pairwise disjoint linear orderings, then their sum
$I=I_1+I_2+\ldots$ is the ordering with underlying set $\bigcup_i I_i$ and order $x<y$ if and only if $x\in I_i$ and $y\in I_j$
for some $i<j$, or $x,y\in I_i$ for some $i$ and $x<_iy$, which is well-ordered if so is each $I_i$.

Two linear orders $(I,<_i)$ and $(J,<_j)$ are called \emph{isomorphic} if there is a bijection $h:I\to J$ with $x<_iy$ implying
$h(x)<_jh(y)$. 
An \emph{order type} is an isomorphism class of linear orderings. The order type of the linear order $I$ is denoted by $o(I)$.
Clearly, if two orderings are isomorphic and one of them is a well-ordering, then so is the other one.
The \emph{ordinals} are the order types of well-orderings (for a concise introduction see e.g. the lecture notes of
J.~A.~Stark~\cite{jalex}). The order types of the finite ordered sets are identified with the
nonnegative integers. The order type of the natural numbers themselves (whose set is $\mathbb{N}_0=\{0,1,\ldots\}$,
equipped by their usual ordering)
is denoted by $\omega$, while the order types of the integers and rational numbers are respectively denoted by $\zeta$ and $\eta$.
Since if $o(I)=o(I')$ and $o(J)=o(J')$, then $o(I+J)=o(I'+J')$, the sum operation can be lifted to order types,
even for $\omega$-sums. For example, $\omega+\omega$ is the order type of $\{0,1\}\times\mathbb{N}$, equipped with the 
lexicographic ordering $(b_1,n_1)<(b_2,n_2)$ if and only if either $b_1<b_2$ or ($b_1=b_2$ and $n_1<n_2$). Note that $1+\omega=\omega$
but $\omega+1\neq\omega$.

The ordinals themselves are also equipped with a relation $<$ so that each set of ordinals is well-ordered by $<$, namely
$o_1<o_2$ if $o_1\neq o_2$ and there are linear orderings $I$ and $J$ such that $o(I)=o_1$, $o(J)=o_2$ and $I$ is a subordering of $J$.
With respect to this relation, every set $\Omega$ of ordinals have a least upper bound (a \emph{supremum}) $\bigvee\Omega$ (which is
also an ordinal), moreover, for each ordinal $\alpha$, the ordinals smaller than $\alpha$ form a set.

Each ordinal $\alpha$ is either a \emph{successor ordinal} in which case $\alpha=\beta+1$ for some smaller ordinal $\beta$,
or a \emph{limit ordinal} in which case $\alpha=\mathop\bigvee\limits_{\beta<\alpha}\beta$, the supremum of all the ordinals
smaller than $\alpha$. These two cases are disjoint. For an example, $0=\bigvee\emptyset$ is a limit ordinal, and it is the
smallest ordinal; $1$, $2$ and $42$ are successor ordinals, $\omega$ is a limit ordinal, $\omega+1$ is again a successor ordinal,
$\omega+\omega$ is a limit ordinal and so on.

Since every set of ordinals is well-ordered, and to each ordinal $\alpha$ the ordinals smaller than $\alpha$ form a set,
the principle of \emph{(well-founded) induction} is valid for ordinals: if $P$ is a property of ordinals, and
\begin{itemize}
	\item whenever $P$ holds for $\alpha$, then $P$ holds for $\alpha+1$ and
	\item whenever $\alpha$ is a limit ordinal and $P$ holds for each ordinal $\beta<\alpha$, then $P$ holds for $\beta$,
\end{itemize}
then $P$ holds for all the ordinals. (In practice we usually separate the case of $\alpha=0$ from the rest of the limit
ordinals.)

Over ordinals, the operations of (binary) product and exponentiation are defined via induction as follows:
\begin{align*}
\alpha\times 0 &=0 & \alpha\times(\beta+1)&=\alpha\times\beta+\alpha& \alpha\times\beta^*&=\mathop\bigvee\limits_{\beta'<\beta^*}\left(\alpha\times\beta'\right)\\
\alpha^0 &=1 & \alpha^{\beta+1}&=\alpha^\beta\times\alpha&\alpha^{\beta^*}&=\mathop\bigvee\limits_{\beta'<\beta^*}\alpha^{\beta'}
\end{align*}
where the equations of the last column hold for limit ordinals $\beta^*$.

Every ordinal $\alpha$ can be uniquely written as a finite sum
\[\alpha=\omega^{\alpha_1}\times n_1+\omega^{\alpha_2}\times n_2+\ldots+\omega^{\alpha_k}\times n_k\]
where $k\geq 0$ and for each $1\leq i\leq k$, $n_i>0$ are integers, and $\alpha_1>\alpha_2>\ldots>\alpha_k$ are ordinals.
The ordinal $\alpha_1$ in this form is called the \emph{degree} of $\alpha$, denoted by $\deg(\alpha)$, and the sum itself is
called the \emph{Cantor normal form} of $\alpha$.  The operations $+$ and $\times$ are associative, and the above operations satisfy the identities
\begin{align*}
\alpha\times(\beta+\gamma)&=\alpha\times\beta+\alpha\times\gamma&\alpha^\beta\times\alpha^\gamma&=\alpha^{\beta+\gamma}&(\alpha^\beta)^\gamma&=\alpha^{\beta\times\gamma}\\
\deg(\alpha+\beta)&=\max\{\deg(\alpha),\deg(\beta)\}&\deg(\alpha\times\beta)&=\deg(\alpha)+\deg(\beta)&\deg(\alpha^\beta)&=\deg(\alpha)\times\beta,
\end{align*}
the last one being valid only when $\alpha\geq\omega$. From $\deg(\alpha+\beta)=\max\{\deg(\alpha),\deg(\beta)\}$ we get that if
$o_1\leq o_2\leq \ldots$ are ordinals with $\deg(o_i)<\alpha$ for some ordinal $\alpha$, then $\deg(o_1+o_2+\ldots)\leq \alpha$
and equality holds if and only if $\bigvee\deg(o_i)=\alpha$ is a limit ordinal, in which case $o_1+o_2+\ldots=\omega^{\alpha}$.

The following theorem from~\cite{doi:10.1112/plms/s3-4.1.177} gives lower and upper bounds for the order type of the union of two well-ordered sets:
\begin{theorem}
	\label{thm-union}
	Let $(I,<)$ be a countable well-ordered set and $I=A\cup B$. Let us write the order types of $A$ and $B$ as
	\begin{align*}
	o(A) &= \omega^{\alpha_1}\times a_1+\ldots \omega^{\alpha_n}\times a_n\\
	o(B) &= \omega^{\alpha_1}\times b_1+\ldots \omega^{\alpha_n}\times b_n
	\end{align*}
	for an integer $n\geq 0$, ordinals $\alpha_1>\alpha_2>\ldots> \alpha_n$ and integer coefficients
	$a_1,\ldots,a_n,b_1,\ldots,b_n\geq 0$ such that $\max\{a_i,b_i\}\geq 1$ for each $1\leq i\leq n$.
	
	Then
	\begin{align*}
	o(I) &= \omega^{\alpha_1}\times c_1+\omega^{\alpha_2}\times c_2+\ldots +\omega^{\alpha_n}\times c_n
	\end{align*}
	for some integer coefficients $0\leq c_1,\ldots,c_n$ with $c_i\leq a_i+b_i$ for each $1\leq i\leq n$,
	and $c_1\geq\max\{a_1,b_1\}$.
\end{theorem}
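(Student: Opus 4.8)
The plan is to pin down $o(I)$ by squeezing it between a lower and an upper bound, and then to remove the remaining ambiguity --- which exponents actually occur, and how large the coefficients are --- by a well-founded induction on $o(I)$.

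First I would record the easy lower bound. Since $A$ and $B$ are suborderings of $I$, we have $o(A)\le o(I)$ and $o(B)\le o(I)$; in particular $\deg o(I)\ge\alpha_1$ (as $\max\{a_1,b_1\}\ge 1$), and once the shape of $o(I)$ is known, comparing leading coefficients in $o(A)\le o(I)$ and $o(B)\le o(I)$ gives $c_1\ge\max\{a_1,b_1\}$. For the matching upper bound I would use the natural (Hessenberg) sum $\oplus$, whose value on two ordinals written over a common list of exponents is the termwise sum of coefficients, so that $o(A)\oplus o(B)=\sum_{i=1}^n\omega^{\alpha_i}\times(a_i+b_i)$. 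Writing $B'=B\setminus A$, the set $I$ is the \emph{disjoint} union $A\uplus B'$ and its order restricts to the given orders on $A$ and on $B'$; since the natural sum is the largest order type carried by any well-ordered shuffle of two well-orders on a disjoint union, $o(I)\le o(A)\oplus o(B')\le o(A)\oplus o(B)$. This already yields $\deg o(I)\le\alpha_1$, hence $\deg o(I)=\alpha_1$ and $c_1\le a_1+b_1$.

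The crux, and the step I expect to be hardest, is that these two bounds alone do \emph{not} force the Cantor normal form of $o(I)$ to use only the exponents $\alpha_1,\ldots,\alpha_n$ (an intermediate exponent strictly between two $\alpha_i$ could a priori appear while still respecting both bounds), so a genuine structural induction is needed. Since $\deg o(I)=\alpha_1$, I would split off the initial segment $H$ of $I$ of order type $\omega^{\alpha_1}$, writing $I=H+T$ with $o(T)<o(I)$, and intersect this cut with each side: $A=(H\cap A)+(T\cap A)$ and $B=(H\cap B)+(T\cap B)$, where $H\cap A,H\cap B$ are initial segments, $T\cap A,T\cap B$ are final segments, and $T=(T\cap A)\cup(T\cap B)$. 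The decisive observation is that $\omega^{\alpha_1}$ is additively principal and the ordinals $\omega^{\gamma}$ are closed under natural sum, so $o(H\cap A)$ and $o(H\cap B)$ cannot both be strictly below $\omega^{\alpha_1}$ (else $o(H)\le o(H\cap A)\oplus o(H\cap B)<\omega^{\alpha_1}$); hence, say, $o(H\cap A)=\omega^{\alpha_1}$. This drops the leading coefficient of $T\cap A$ to $a_1-1$, while every other coefficient of $T\cap A$ and $T\cap B$ stays bounded by the original $a_i,b_i$ and no new exponents are introduced, because a final segment of a well-order has Cantor-normal-form exponents among, and coefficients bounded by, those of the whole.

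Applying the induction hypothesis to $T=(T\cap A)\cup(T\cap B)$ then gives $o(T)=\sum_i\omega^{\alpha_i}\times c_i'$ with $c_1'\le(a_1-1)+b_1$, with $c_i'\le a_i+b_i$ for $i\ge 2$, and with no exponents outside $\{\alpha_1,\ldots,\alpha_n\}$; re-adding the block via $o(I)=\omega^{\alpha_1}+o(T)$ restores $c_1=1+c_1'\le a_1+b_1$ and leaves the remaining coefficients untouched, closing the induction (the base case $o(I)=0$ being trivial). Combining this with the lower bound from the second paragraph yields $c_1\ge\max\{a_1,b_1\}$ as well. The only delicate points are the behaviour of Cantor-normal-form coefficients under passing to final segments and the ``at least one full $\omega^{\alpha_1}$-block must come from $A$ or from $B$'' argument; both ultimately rest on the additive principality of $\omega^{\alpha_1}$, which is the one ordinal-arithmetic fact doing the real work.
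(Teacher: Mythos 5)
The paper does not prove this statement at all: Theorem~\ref{thm-union} is imported verbatim from the cited reference and used as a black box, so there is no in-paper proof to compare yours against. Judged on its own, your argument is sound and complete in outline. The two external ingredients you lean on are both correct and standard: (i) Carruth's theorem that a well-ordered union satisfies $o(A\cup B)\le o(A)\oplus o(B)$, which you use both for the global upper bound and for the key step that $\omega^{\alpha_1}=o(H)\le o(H\cap A)\oplus o(H\cap B)$ forces one of the two pieces to have full type $\omega^{\alpha_1}$ (since $\omega^{\alpha_1}$ is closed under $\oplus$); and (ii) the fact that a final segment of a well-order of type $\omega^{\gamma_1}m_1+\cdots+\omega^{\gamma_k}m_k$ has type $\omega^{\gamma_j}m_j'+\omega^{\gamma_{j+1}}m_{j+1}+\cdots+\omega^{\gamma_k}m_k$ with $m_j'\le m_j$, which justifies that peeling off $H$ introduces no new exponents and decrements only the leading coefficient of the side that absorbed the block (by left cancellation, $o(T\cap A)$ is exactly $\omega^{\alpha_1}(a_1-1)+\omega^{\alpha_2}a_2+\cdots$). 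You are also right that the two bounds alone do not exclude intermediate exponents, so the induction is genuinely needed. One point worth making explicit: the induction is well-founded because $o(I)\le o(A)\oplus o(B)<\omega^{\alpha_1+1}$, hence $o(T)<\omega^{\alpha_1+1}$ and therefore $\omega^{\alpha_1}+o(T)>o(T)$, i.e.\ $o(T)<o(I)$ strictly. With that remark added, and a citation (or short proof) for Carruth's lemma, the argument is a valid self-contained proof of a statement the paper only quotes.
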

Observe that the Theorem can be applied as follows: if $o(A)<\omega^\alpha\times N$ and $o(B)<\omega^\beta\times M$,
then $o(A\cup B)< \omega^{\max\{\alpha,\beta\}}\times(N+M-1)$: writing out the Cantor normal forms explicitly for
$o(A)$ and $o(B)$ we would get the coefficients for $\omega^{\max\{\alpha,\beta\}}$ can be at most $N-1$ and $M-1$,
respectively, making its coefficient in $o(I)$ to be at most $M+N-2$, thus (as the main term cannot be larger than $\omega^{\max\{\alpha,\beta\}}$ in either one of $o(A)$ and $o(B)$) we get $o(I)<\omega^{\max\{\alpha,\beta\}}\times(M+N-1)$. In particular, $\deg(o(A\cup B))=\max\{\deg(o(A)),\deg(o(B))\}$.

\subsection*{Order types of context-free languages}
For a nonempty finite set (an \emph{alphabet}) $\Sigma$ of terminal symbols, also called \emph{letters} equipped with a total ordering $<$,
let $\Sigma^*$ denote the set of all finite words $a_1a_2\ldots a_n$,
with $\varepsilon$ standing for the case $n=0$, the \emph{empty word}, and let $\Sigma^\omega$ denote the set of all
\emph{$\omega$-word}s $a_1a_2\ldots$. The set of all finite and $\omega$-words is $\Sigma^{\leq\omega}=\Sigma^\omega\cup\Sigma^*$.
When $u=a_1\ldots a_n$ is a finite word and $v=b_1b_2\ldots$ is either a finite or an $\omega$-word, then their product
is the word $u\cdot v=a_1\ldots a_nb_1b_2\ldots$, also written $uv$. Also, when $u=a_1\ldots a_n$ is a finite word, then its
\emph{$\omega$-power} is the word $u^\omega=a_1\ldots a_na_1\ldots a_na_1\ldots$ which is $\varepsilon$ if $u=\varepsilon$
and is an $\omega$-word whenever $u$ is nonempty.

Two (strict) partial orderings, the \emph{strict ordering} $<_s$ and the \emph{prefix ordering} $<_p$ are defined over $\Sigma^{\leq\omega}$
as follows:
\begin{itemize}
	\item  $u<_sv$ if and only if $u=u_1au_2$ and $v=u_1bv_2$ for some words $u_1\in\Sigma^*$, $u_2,v_2\in\Sigma^{\leq\omega}$
	and terminal symbols $a<b$
	\item $u<_pv$ if and only if $v=uw$ for some nonempty word $w\in\Sigma^{\leq\omega}$ (in particular, this implies $u\in\Sigma^*$).
\end{itemize}
The union of these partial orderings, the \emph{lexicographical ordering} $<_\ell~=~<_s\cup<_p$, simply written as $<$ when
it is clear from the context,
is a total ordering on $\Sigma^{\leq\omega}$, which is a complete lattice with respect to $<_\ell$.

A \emph{language} is an arbitrary set $L\subseteq\Sigma^*$ of \emph{finite} words. The \emph{supremum} of $L$, viewed as a subset
of $\bigl(\Sigma^{\leq \omega},<_\ell\bigr)$ is denoted by $\bigvee L$ and is either a finite word $u\in L$, or an $\omega$-word.
The \emph{order type} $o(L)$ of $L$ is the order type of the linear ordering $(L,\leq_\ell)$. 
As an example, the order types of the languages $a^*$, $a^*\cup\{b\}$ and
$b^*a^*$ are $\omega$, $\omega + 1$ and $\omega^2$, respectively. We say that $L$ is \emph{well-ordered} if so is $(L,<_\ell)$.
For example, the previous three languages are well-ordered but $a^*b$ is not (as it contains an infinite descending chain
$\ldots<aab<ab<b$).

When $K$ and $L$ are languages,
then their product is $K\cdot L=\{uv:u\in K,v\in L\}$ and if $u\in\Sigma^*$, then the \emph{left quotient of $L$ with respect to $u$}
is $u^{-1}L=\{v\in\Sigma^*:uv\in L\}$, and of course, $K\cup L=\{u:u\in K\hbox{ or }u\in L\}$ is their \emph{union}.
We write $K<_\ell L$ if $u<_\ell v$ for each $u\in K$ and $v\in L$. Thus, if $K<_\ell L$, then viewing them as the linear orderings
$(K,<_\ell)$ and $(L,<_\ell)$ we get their \emph{sum} $K+L~=~(K\cup L,<_\ell)$. We put an emphasis here on the fact that
taking the sum of two languages $K$ and $L$ is a \emph{partial} operation, defined only if $K<_\ell L$.

When $L$ is a language and $u$ is a (possibly infinite) word, then let $L^{<u}$ and $L^{\geq u}$ respectively denote the languages
$\{v\in L:v<u\}$ and $\{v\in L:v\geq u\}$. Then clearly, $L=L^{<u}+L^{\geq u}$ for any $L$ and $u$.
Note that $L^{<\varepsilon}=\emptyset$ and $L^{\geq\varepsilon}=L$, and also $L^{<a\cdot u}=L^{<a}+ a\bigl((a^{-1}L)^{<u}\bigr)$,
$L^{\geq a\cdot u}=a\bigl((a^{-1}L)^{\geq u}\bigr) + L^{\geq b}$ for the least letter $b$ with $a<b$, if such a letter exists
and $L^{\geq a\cdot u}=a\bigl((a^{-1}L)^{\geq u}\bigr)$ if $a$ is the last letter of the alphabet.
Moreover, $(K\cup L)^{>u}=K^{>u}\cup L^{>u}$ and $(K\cup L)^{\geq u}=K^{\geq u}\cup L^{\geq u}$.

A \emph{context-free grammar} is a tuple $G=(N,\Sigma,P,S)$ with $N$ and $\Sigma$ being the disjoint alphabets of nonterminal and terminal
symbols respectively, $S\in N$ is the \emph{start symbol} and $P$ is a finite set of \emph{productions} of the form $A\to \balpha$
with $A\in N$ being a nonterminal and $\balpha$ being a \emph{sentential form}, i.e. $\balpha=X_1\ldots X_n$ for some 
$n\geq 0$ and $X_1,\ldots,X_n\in N\cup\Sigma$. If $\balpha=uX\bbeta$ for some $u\in\Sigma^*$, $X\in N$ and $\bbeta\in(N\cup\Sigma)^*$,
and $X\to\boldsymbol{\gamma}$ is a production, then $\balpha$ can be rewritten to $u\boldsymbol{\gamma}\bbeta$, which is denoted by $\balpha\Rightarrow u\boldsymbol{\gamma}\bbeta$.
The reflexive-transitive closure of the relation $\Rightarrow$ is denoted by $\Rightarrow^*$.
For any set $\Delta$ of sentential forms, the \emph{language generated by $\Delta$} is
$L(\Delta)~=~\{u\in\Sigma^*:\balpha\Rightarrow^*u\hbox{ for some }\balpha\in\Delta\}$.
For brevity, when $\Delta=\{\balpha_1,\ldots,\balpha_n\}$ is finite, we simply write $L(\balpha_1,\ldots,\balpha_n)$.
Moreover, $o(\Delta)$ denotes $o(L(\Delta))$.

The \emph{language $L(G)$ generated by $G$} is $L(S)$. Languages generated by context-free grammars are called context-free languages.
Two context-free grammars $G$ and $G'$ over the the same terminal alphabet are
\emph{equivalent} if $L(G)=L(G')$ and \emph{order-equivalent} if $o(L(G))=o(L(G'))$.
Any context-free grammar generating a nonempty language of nonempty words can be
effectively transformed into a \emph{Greibach normal form} in which the following all hold:
\begin{itemize}
	\item each production has the form $X\to aX_1\ldots X_n$ for some $a\in\Sigma$,
	\item 	each nonterminal $X$ is \emph{productive}, i.e., $L(X)\neq\emptyset$,
			and \emph{accessible}, i.e., $S\Rightarrow^*uX\balpha$ for some $u\in\Sigma^*$ and $\balpha\in(N\cup\Sigma)^*$.
\end{itemize}
Also, considering the grammar $G'=(N\cup\{S'\},\Sigma,P\cup\{S'\to aS\},S')$ for a fresh symbol $S'$, we get $L(G')=a\cdot L(G)$,
and the order type of each $X\in N$ is the same in both cases, and of course $o(S)=o(S')$.
Thus, to each grammar $G$ one can effectively construct another one $G'$ in Greibach normal form, with $o(L(G))=o(L(G'))$.

Suppose $\balpha=aX_1\ldots X_n$ is a sentential form of a context-free grammar $G=(N,\Sigma,P,S)$ in Greibach normal form
and $b$ is a terminal symbol. Then we define $\balpha^{<b}$, $\balpha^{\geq b}$ and $b^{-1}\balpha$ as the following finite sets of sentential forms:
\begin{align*}
\balpha^{<b}&=\begin{cases}
\{\balpha\}&\hbox{if }a<b\\
\emptyset&\hbox{otherwise}
\end{cases}
&
\balpha^{\geq b}&=\begin{cases}
\emptyset&\hbox{if }a<b\\
\{\balpha\}&\hbox{otherwise}
\end{cases}
\end{align*}

\begin{align*}
b^{-1}\balpha&=\begin{cases}
\{\varepsilon\}&\hbox{if }a=b\hbox{ and }n=0\\
\{X_1\ldots X_n\}&\hbox{if }a=b,~n>0\hbox{ and }X_1\in\Sigma\\
\{\boldsymbol{\delta} X_2\ldots X_n:X_1\to \boldsymbol{\delta}\in P\}&\hbox{if }a=b,~n>0\hbox{ and }X_1\in N\\
\emptyset&\hbox{otherwise}
\end{cases}
\end{align*}
Then clearly, $L(\balpha^{<b})=L(\balpha)^{<b}$, $L(\balpha^{\geq b})=L(\balpha)^{\geq b}$ and $L(b^{-1}\balpha)=b^{-1}L(\balpha)$.
Extending these definitions with $\varepsilon^{<b}=\{\varepsilon\}$, $\varepsilon^{\geq b}=b^{-1}\varepsilon=\emptyset$
and the recursion
\begin{align*}
\balpha^{<b\cdot u}&=\balpha^{<b}\cup \{b\cdot(\boldsymbol{\gamma}^{<u}):\boldsymbol{\gamma}\in b^{-1}\balpha\}&
\alpha^{\geq b\cdot u}&=\{b\cdot(\boldsymbol{\gamma}^{\geq u}):\boldsymbol{\gamma}\in b^{-1}\balpha\}\cup\balpha^{\geq c}
\end{align*}
where $c\in\Sigma$ is the first letter with $b<c$ if such a $c$ exists, otherwise
\begin{align*}
\balpha^{\geq b\cdot u}&=\{b\cdot(\boldsymbol{\gamma}^{\geq u}):\boldsymbol{\gamma}\in b^{-1}\balpha\}
\end{align*}
and $(a\cdot u)^{-1}\balpha=\bigcup\bigl(u^{-1}\boldsymbol{\gamma}:\boldsymbol{\gamma}\in a^{-1}\balpha\bigr)$ we have $L(\balpha^{<u})=L(\balpha)^{<u}$,
$L(\balpha^{\geq u})=L(\balpha)^{\geq u}$ and $L(u^{-1}\balpha)=u^{-1}L(\balpha)$ for any sentential form $\balpha$ not beginning
with a nonterminal and word $u$, moreover, each member of any of these sets is still a sentential form not beginning
with a nonterminal. Clearly, $\balpha^{<u}$, $\balpha^{\geq u}$ and $u^{-1}\balpha$ are all computable for any $u$ and $\balpha$.

A context-free grammar $G=(N,\Sigma,P,S)$ is called an \emph{ordinal grammar} if $o(X)$ is an ordinal
and $L(X)$ is a \emph{prefix-free language} (that is, there are no words $u,v\in L(X)$ with $u<_pv$) for each nonterminal $X\in N$.
It is known \cite{DBLP:journals/fuin/BloomE10} that to each well-ordered context-free language $L$ there exists an ordinal grammar $G$ generating $L$.
It is also known that for \emph{regular} grammars (in which each production has the form $A\to uB$ or $A\to v$)
generating a well-ordered language $L$, order equivalence is decidable~\cite{DBLP:journals/ita/Thomas86},
while for general context-free grammars, it is undecidable whether $o(L(G))=o(L(G'))$ for two grammars $G$ and $G'$:
it is already undecidable whether $o(L(G))=\eta$ holds (or that whether $o(L(G))$ is dense)~\cite{ESIK2011107}.
In contrast, it is decidable whether $L(G)$ is well-ordered~\cite{10.1007/978-3-642-22321-1_19}.

It is unknown whether the order-equivalence problem is decidable for two grammars generating well-ordered languages.

In this paper we show that it is decidable whether $o(L(G))=o(L(G'))$ for two \emph{ordinal} grammars $G$ and $G'$.
Thus, if there is an algorithm that constructs an ordinal grammar $G'$ for an input context-free grammar $G$
generating a well-ordered language (it is known that such an ordinal grammar $G'$ exists but the proof is nonconstructive),
then the order-equivalence problem is decidable for well-ordered context-free languages.
As any finite system $E$ of fixed point equations over variables taking ordinals as values
can effectively by transformed into an ordinal grammar $G$ such that $o(L(G))$
coincides with the least fixed point of the first component of $E$~\cite{DBLP:journals/fuin/BloomE10},
we also get as a byproduct that 
the Cantor normal form of an algebraic ordinal,
given by a finite system of fixed point equations, is
effectively computable. Thus, the isomorphism problem of algebraic ordinals is decidable.

\section{Ordinal grammars}
In this section we recall some known properties of ordinal grammars
and then we prove that the order type of the lexicographic ordering of a language, given by an ordinal grammar, is computable.

It is known from~\cite{DBLP:journals/fuin/BloomE10,10.1007/978-3-642-29344-3_25} that the following are equivalent for an ordinal $\alpha$:
\begin{enumerate}
	\item $\alpha<\omega^{\omega^\omega}$.
	\item $\alpha=o(L(G))$ for a context-free grammar $G$.
	\item $\alpha=o(L)$ for a deterministic context-free language $L$.
	\item $\alpha=o(L(G))$ for an ordinal grammar $G$.
\end{enumerate}

If $G=(N,\Sigma,P,S)$ is a context-free grammar, we define the relation $\preceq$ on $N\cup\Sigma$ as follows: $Y\preceq X$ if and only if $X\Rightarrow^*\balpha Y\bbeta$
for some $\balpha,\bbeta\in(N\cup\Sigma)^*$. Clearly, $\preceq$ is reflexive and transitive (a preorder): $X\approx Y$ denotes that $X\preceq Y$ and $Y\preceq X$ holds.
An equivalence class of $\approx$ is called a \emph{component} of $G$.
If $Y\preceq X$ and they do not belong to the same component, we write $Y\prec X$. As an extension, when $\balpha=X_1\ldots X_n$
is a sentential form with $X_i\prec X$ for each $i\in[n]$, we write $\balpha\prec X$.
Productions of the form $X\to\balpha$ with $\balpha\prec X$ are called \emph{escaping} productions, the others
(when $X_i\approx X$ for some $i\in[n]$) are called \emph{component} productions.

A nonterminal $X$ is called \emph{recursive} if $X\Rightarrow^+\balpha X\bbeta$ for some $\balpha,\bbeta\in(N\cup\Sigma)^*$.

The following are known for ordinal grammars having only usable nonterminals:
\begin{lemma}[\cite{DBLP:journals/fuin/BloomE10}, Proposition 4.9]
\label{lem-product}
If $G$ is an ordinal grammar, then for any word $X_1\ldots X_n\in(\Sigma\cup N)^*$, $o(X_1\ldots X_n)=o(X_n)\times o(X_{n-1})\times\ldots\times o(X_1)$.
\end{lemma}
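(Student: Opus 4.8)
The plan is to reduce the claim to the case of two factors by induction on $n$ and then to compute the order type of a product of two languages directly. For $n\le 1$ the statement is immediate (the empty product of ordinals is $1$, matching $o(\varepsilon)=o(\{\varepsilon\})=1$). For the inductive step I would write $L(X_1\ldots X_n)=L(X_1)\cdot L(X_2\ldots X_n)$ and apply the two-factor result below with the prefix-free language $K=L(X_1)$ and the well-ordered language $L=L(X_2\ldots X_n)$ (well-ordered by the induction hypothesis). This yields $o(X_1\ldots X_n)=o(X_2\ldots X_n)\times o(X_1)$, and combining with the induction hypothesis $o(X_2\ldots X_n)=o(X_n)\times\ldots\times o(X_2)$ together with associativity of $\times$ produces the reversed product. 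Here each $L(X_i)$ for a nonterminal $X_i$ is prefix-free since $G$ is an ordinal grammar, and for a terminal $X_i$ the language $L(X_i)$ is a singleton, hence trivially prefix-free with order type $1$.

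So the whole weight falls on the following two-factor claim: if $K$ is a prefix-free language and $L$ is well-ordered, then $o(K\cdot L)=o(L)\times o(K)$. First I would show that prefix-freeness of $K$ gives every $w\in K\cdot L$ a \emph{unique} factorization $w=u\,v$ with $u\in K$ and $v\in L$: if $u,u'\in K$ both arise as the $K$-part of such a factorization of $w$, then both are prefixes of $w$ and hence prefix-comparable, so prefix-freeness of $K$ forces $u=u'$ and then $v=v'$. This makes $w\mapsto(u,v)$ a bijection from $K\cdot L$ onto $K\times L$. The key step is to verify that this bijection carries $<_\ell$ to the lexicographic order on $K\times L$ in which the $K$-coordinate is the \emph{most significant}, i.e.\ $(u_1,v_1)$ precedes $(u_2,v_2)$ iff $u_1<_\ell u_2$, or $u_1=u_2$ and $v_1<_\ell v_2$. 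When $u_1=u_2$ this is just the fact that $<_\ell$ is a left congruence ($v_1<_\ell v_2$ implies $u\,v_1<_\ell u\,v_2$, checked separately for $<_s$ and $<_p$). When $u_1\neq u_2$, prefix-freeness of $K$ forces $u_1<_s u_2$ or the symmetric case, so the first letter on which $u_1$ and $u_2$ disagree already witnesses $u_1v_1<_s u_2v_2$ regardless of the $v_i$; this is exactly where the hypothesis on $K$ is essential.

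Finally I would read off the order type of this lexicographic product. For each fixed $u\in K$ the fibre $\{u\}\times L$ is an interval order-isomorphic to $L$, and these fibres are concatenated in the order of $K$, so $o(K\cdot L)=\sum_{u\in K}o(L)$, a transfinite sum of $o(K)$ copies of the constant $o(L)$. By the recursive definition of ordinal multiplication this constant sum is $o(L)\times o(K)$, which proves the claim (well-ordering of $K\cdot L$ also follows, since a lexicographic product of well-orders admits no infinite descending chain). The one genuinely load-bearing point is the order reversal: because the leftmost factor is the most significant coordinate of the lexicographic order while it is the outermost summand of the product, the symbols appear in opposite order in $o(X_n)\times\ldots\times o(X_1)$; everything else is bookkeeping together with the standard arithmetic identity for a constant transfinite sum.
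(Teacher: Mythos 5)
Your proof is correct. Note that the paper does not prove this lemma at all --- it is imported verbatim from Bloom and \'Esik (Proposition 4.9 of the cited work) --- so there is no in-paper argument to compare against; your reconstruction is the natural one and, as far as I can tell, matches the standard proof of that proposition. The three load-bearing points are all handled properly: unique factorization of each $w\in K\cdot L$ via prefix-freeness of $K$; the order-isomorphism of $(K\cdot L,<_\ell)$ with the lexicographic product in which the $K$-coordinate is most significant (the case $u_1\neq u_2$ correctly uses that prefix-incomparable words are $<_s$-comparable, so the discrepancy letter decides the comparison independently of the $L$-parts); and the conversion of the constant transfinite sum $\sum_{u\in K}o(L)$ into $o(L)\times o(K)$ under the paper's convention $\alpha\times(\beta+1)=\alpha\times\beta+\alpha$, which is exactly what produces the reversal of the factors. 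Two small points worth making explicit: in the inductive step you should observe that $L(X_2\ldots X_n)$ is not just well-ordered but has ordinal order type by the induction hypothesis, and that $K=L(X_1)$ is itself well-ordered (it is either a singleton or $L$ of a nonterminal of an ordinal grammar), since otherwise $o(L)\times o(K)$ would only be an order-type product rather than an ordinal product; neither observation costs anything.
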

We will frequently use the above Lemma in the following form: if $X\to X_1\ldots X_n$ is a production of the ordinal grammar $G$
(and thus $L(X_1\ldots X_n)\subseteq L(X)$), then
$o(X_n)\times o(X_{n-1})\times\ldots\times o(X_1)\leq o(X)$.
\begin{lemma}[\cite{DBLP:journals/fuin/BloomE10}, Propositions 4.11, 4.15 and 4.16]
	\label{lem-u0}
	To each recursive nonterminal $X$ there exists a nonempty word $u_X$ such that
	if $X\Rightarrow^+ uX\balpha$ for some $u\in\Sigma^*$ and $\balpha\in(N\cup\Sigma)^*$,
	then $u\in u_X^+$.
	
	Moreover, whenever $X\Rightarrow^*w$ for some word $w$, then $w<_s u_X^\omega$.
\end{lemma}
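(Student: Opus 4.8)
The plan is to put $G$ into Greibach normal form and to study the semigroup of prefixes produced by recursion. Define $U=\{u\in\Sigma^*: X\Rightarrow^+ uX\bbeta\text{ for some }\bbeta\in(N\cup\Sigma)^*\}$. Since each production reads a terminal first, every $u\in U$ is nonempty, and composing two recursive derivations (namely $X\Rightarrow^+ uX\bbeta$ followed by rewriting the displayed $X$ via $X\Rightarrow^+ vX\boldsymbol{\gamma}$, yielding $uvX\boldsymbol{\gamma}\bbeta$) shows $uv\in U$; thus $U$ is a nonempty subsemigroup of $\Sigma^+$. The first claim amounts to saying that $U$ is contained in $\rho^+$ for a single word $\rho$, which we then name $u_X$. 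By the commutation theorem of combinatorics on words, a family of words lies in $\rho^+$ for one common primitive root $\rho$ as soon as its members pairwise commute; so it suffices to prove that any $u,v\in U$ satisfy $uv=vu$.

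The heart of the argument is to show that a non-commuting pair $u,v\in U$ contradicts the well-ordering of $L(X)$ by producing an infinite $<_s$-descending chain. First I record two reductions that make the tails harmless. Because every nonterminal is productive and each $o(\bbeta)$ is an ordinal (so $\mu(\bbeta)=\min L(\bbeta)$ exists), every recursion prefix $w\in\{u,v\}^+\subseteq U$ extends to a genuine member of $L(X)$: from $X\Rightarrow^+ wX\bbeta$ we close off the residual $X$ and $\bbeta$ by their least words, obtaining $w\cdot\mu(X)\cdot\mu(\bbeta)\in L(X)$. Moreover, since $L(X)$ is prefix-free, any two distinct members are $<_s$-comparable, so their order is decided at the first position where they differ. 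Now suppose $u^\omega\ne v^\omega$ and let $p$ be their longest common prefix, say $u^\omega=pa\cdots$ and $v^\omega=pb\cdots$ with $a<b$. Replacing $u,v$ by large powers $u^K,v^K\in U$ (which changes neither $u^\omega,v^\omega$ nor $p$) I may assume $|u|>|p|$, so $u$ itself begins with $pa$; then $uv^\omega<_s v^\omega$ at position $|p|+1$, whence the infinite words $\tau_n=u^n v^\omega$ are strictly $<_s$-decreasing. Realising the prefixes $u^n v^{m}$ as words $W_n=u^n v^{m_n}\mu(X)\mu(\bbeta_n)\in L(X)$ with $m_n$ large enough that the decisive position $n|u|+|p|+1$ falls strictly inside the $\{u,v\}$-block, hence before the closed-off tails, the comparison $W_{n+1}<_s W_n$ is inherited from $\tau_{n+1}<_s\tau_n$. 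This gives $W_0>_s W_1>_s\cdots$, the desired contradiction. I expect the main obstacle here to be exactly this bookkeeping: getting into the case $|u|>|p|$ through the Fine--Wilf bound and then certifying that the decisive letter always precedes the uncontrolled tails.

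Once $U\subseteq u_X^+$ is established, the primitive root $u_X=\rho$ satisfies $u^\omega=u_X^\omega$ for every $u\in U$, and it remains to prove $w<_s u_X^\omega$ for all $w\in L(X)$. I would first use prefix-freeness to rule out $w$ being a prefix of $u_X^\omega$: choosing $N\in\{e:\rho^e\in U\}$ large (this exponent set is a subsemigroup of $\mathbb{N}$, hence contains arbitrarily large elements), the realised word $u_X^N\mu(X)\mu(\bbeta)\in L(X)$ would have the proper prefix $w$, contradicting prefix-freeness. Hence every $w\in L(X)$ is $<_s$-comparable with $u_X^\omega$, and only $w>_s u_X^\omega$ must be excluded. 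Here a prepending trick gives a clean contradiction: if $w>_s u_X^\omega$ with first discrepancy at position $j\le|w|$, then for any $u=\rho^e\in U$ a direct periodicity computation shows $uw<_s w$, because $uw$ and $w$ agree with $u_X^\omega$ up to position $j-1$ but $uw$ carries $u_X^\omega$'s smaller letter at position $j$ while $w$ carries the larger one. Iterating the recursion $X\Rightarrow^+ uX\bbeta$ to form $W^{(k)}=u^k w\,\mu(\bbeta^{(k)})\in L(X)$, the inequality $uw<_s w$ (whose decisive position $j\le|w|$ again precedes the tails) propagates to $W^{(k+1)}<_s W^{(k)}$, producing once more an infinite descending chain. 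This contradiction completes the \emph{moreover} part, and I regard the only delicate point there as the periodicity computation yielding $uw<_s w$, everything else being a reuse of the tail-control already set up in the previous paragraph.
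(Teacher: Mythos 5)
This lemma is imported: the paper cites it from \cite{DBLP:journals/fuin/BloomE10} (Propositions 4.11, 4.15, 4.16) and contains no proof of its own, so there is no in-paper argument to compare against line by line. Your blind reconstruction is, as far as I can check, correct, and it follows what is essentially the standard route: the set $U$ of terminal prefixes arising in recursions $X\Rightarrow^+uX\bbeta$ is a subsemigroup of $\Sigma^+$; two non-commuting $u,v\in U$ give $u^\omega\neq v^\omega$, and after passing to powers so that $|u|$ exceeds the length of the common prefix $p$ of $u^\omega$ and $v^\omega$, words of the form $u^nv^{m_n}\cdot w_0\cdot z_n$ with $w_0\in L(X)$ and $z_n\in L(\bbeta_n)$ form an infinite $<_s$-descending chain in $L(X)$, contradicting well-orderedness; hence all of $U$ shares one primitive root $\rho=u_X$. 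The \emph{moreover} part, via prefix-freeness excluding $w<_pu_X^\omega$ and the prepending argument $uw<_sw$ excluding $w>_su_X^\omega$, is also sound. The bookkeeping you flagged does close: the decisive position $n|u|+|p|+1$ lies inside $u^{n+1}$ on one side because $|u|>|p|$, and inside the block $u^nv^{m_n}$ on the other once $m_n\geq|p|+1$.

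Two points should be tightened. First, ``put $G$ into Greibach normal form'' is doing real work and is not free: the statement quantifies over derivations of the \emph{given} grammar, and the GNF transformation changes the nonterminal set, the derivation relation, and hence the set $U$. Indeed the lemma as stated is false for arbitrary ordinal grammars with usable nonterminals (take $X\to XA\mid a$, $A\to\varepsilon$: then $X\Rightarrow^+\varepsilon\cdot XA$, so $\varepsilon\in U$ and no nonempty $u_X$ can work); what you actually need, and what is available wherever the paper invokes the lemma, is the absence of $\varepsilon$-productions and chain rules, from which $u\neq\varepsilon$ follows directly (if $X\Rightarrow^+X\bbeta$ with $\bbeta$ deriving a nonempty $z$, then $y<_pyz$ for $y\in L(X)$ violates prefix-freeness). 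State that hypothesis rather than invoking GNF wholesale. Second, your justification of $uw<_sw$ (``$uw$ carries $u_X^\omega$'s smaller letter at position $j$'') literally requires $e|u_X|\geq j$; for smaller $e$ the conclusion still holds, but only via the periodicity $u_X^\omega[m]=u_X^\omega[m+e|u_X|]$, which you should either write out or sidestep by fixing $e$ large, as your setup permits. Neither issue is fatal to the argument.
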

\begin{lemma}[\cite{DBLP:journals/fuin/BloomE10}, Corollary 4.10]
	\label{lem-monotone}
	If $Y\preceq X$ for the symbols $X,Y\in N\cup\Sigma$, then $o(Y)\leq o(X)$. So if $X\approx Y$, then $o(X)=o(Y)$.
\end{lemma}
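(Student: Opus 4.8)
The plan is to deduce the inequality from the product formula of Lemma~\ref{lem-product} together with the elementary fact that a subordering of a well-order cannot have larger order type. Suppose $Y\preceq X$; by definition there is a derivation $X\Rightarrow^*\balpha Y\bbeta$ for some $\balpha,\bbeta\in(N\cup\Sigma)^*$. Everything derivable from $\balpha Y\bbeta$ is then derivable from $X$, so $L(\balpha Y\bbeta)\subseteq L(X)$, and since $(L(X),<_\ell)$ is well-ordered this inclusion immediately gives $o(\balpha Y\bbeta)\leq o(X)$.

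It then remains to show $o(Y)\leq o(\balpha Y\bbeta)$. Writing $\balpha Y\bbeta$ out symbol by symbol and applying Lemma~\ref{lem-product}, the order type $o(\balpha Y\bbeta)$ is a product of the form $P\times o(Y)\times Q$, where $P$ and $Q$ are the (reversed) products of the order types of the symbols occurring in $\bbeta$ and in $\balpha$ respectively, each being the empty product $1$ when the corresponding side is empty. The key observation is that, because we work with a grammar all of whose nonterminals are usable, every symbol $Z$ appearing in $\balpha Y\bbeta$ is productive, so $L(Z)\neq\emptyset$ and hence $o(Z)\geq 1$; terminals trivially satisfy $o(a)=1$. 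Consequently $P\geq 1$ and $Q\geq 1$.

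Now I appeal to the monotonicity of ordinal multiplication. Since $Q\geq 1$, right-multiplication is nondecreasing and gives $o(Y)=o(Y)\times 1\leq o(Y)\times Q$; since $P\geq 1$, left-multiplication is nondecreasing and gives $o(Y)\times Q=1\times\bigl(o(Y)\times Q\bigr)\leq P\times\bigl(o(Y)\times Q\bigr)$. Chaining these with the inclusion from the first paragraph yields $o(Y)\leq o(\balpha Y\bbeta)\leq o(X)$, which is the first assertion. The second assertion is then immediate: if $X\approx Y$ then both $Y\preceq X$ and $X\preceq Y$ hold, so $o(Y)\leq o(X)$ and $o(X)\leq o(Y)$, whence $o(X)=o(Y)$.

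The only genuinely delicate point is handling the two directions of monotonicity of the ordinal product correctly—both left- and right-multiplication by a nonzero ordinal are nondecreasing, and one must guard against conflating these with the finer strict/non-strict behaviour of the product—together with guaranteeing that every factor is at least $1$, which is precisely what the usability (productivity) hypothesis on the grammar supplies. Everything else is a direct combination of Lemma~\ref{lem-product} and the fact that suborderings of well-orders have no larger order type.
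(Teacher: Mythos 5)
Your proof is correct. Note that the paper does not prove this lemma at all---it imports it verbatim from Bloom and \'Esik (Corollary 4.10 of the cited work), so there is no in-paper argument to compare against; what you have written is a legitimate self-contained derivation from Lemma~\ref{lem-product}, which the paper also imports. The two steps you rely on are both sound: $L(\balpha Y\bbeta)\subseteq L(X)$ gives $o(\balpha Y\bbeta)\leq o(X)$ because a subordering of a well-order has order type at most that of the whole (this is exactly how the paper defines $\leq$ on ordinals), and the factorization $P\times o(Y)\times Q$ with $P,Q\geq 1$ (using productivity of all symbols) yields $o(Y)\leq o(\balpha Y\bbeta)$ by the correct monotonicity facts---weak monotonicity of the product in the left factor and monotonicity in the right factor. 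One remark: the detour through Lemma~\ref{lem-product} can be avoided entirely. Since every symbol is productive one can fix $u\in L(\balpha)$ and $v\in L(\bbeta)$, and because $L(Y)$ is prefix-free the map $w\mapsto uwv$ is an order-embedding of $(L(Y),<_\ell)$ into $(L(X),<_\ell)$ (prefix-freeness guarantees that any two distinct words of $L(Y)$ are $<_s$-comparable, and $<_s$ is preserved by left and right concatenation), which gives $o(Y)\leq o(X)$ directly. Your version is no less valid, but it inherits whatever is needed to prove Lemma~\ref{lem-product}, whereas the embedding argument needs only prefix-freeness.
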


For the rest of the section, let $G=(N,\Sigma,P,S)$ be an ordinal grammar.
Since it is decidable whether $L(G)$ is finite, and in that case its order type $o(G)=|L(G)|$
is computable, we assume from now on that $L(G)$ is infinite.

Without loss of generality we can assume that $G$ is in \emph{normal form}:
\begin{itemize}
	\item $G$ has only usable nonterminals: for each $X$, there are words $u,v,w\in\Sigma^*$ with $S\Rightarrow^*uXv$ and
	$X\Rightarrow^*w$.
	\item $L(X)$ is infinite for each nonterminal $X$;
	\item Each production in $P$ has the form $A\to a\balpha$ for some $A\in N$, $a\in\Sigma$ and $\balpha\in(N\cup\Sigma)^*$;
	\item All nonterminals different from $S$ are recursive.
\end{itemize}
To see that such a normal form is computable, consider the following sequence of transformations, starting from an ordinal grammar $G$:
\begin{enumerate}
	\item Unusable nonterminals are eliminated applying the usual algorithm~\cite{Hopcroft+Ullman/79/Introduction}.
	\item If $L(A)$ is finite for some nonterminal $A$, then $A$ gets replaced by all the members of $L(A)$
	on each right-hand side and gets erased from the set of nonterminals. The result of this transformation is still an ordinal grammar.
	\item In particular, if $A\Rightarrow^*\varepsilon$, then by prefix-freeness of $L(A)$ we get that $L(A)=\{\varepsilon\}$, so after this step no $\varepsilon$-transitions remain.
	\item Chain rules of the form $A\to B$ with $A,B\in N$ also get eliminated by the usual algorithm which still
	outputs an ordinal grammar as the generated languages do not change.
	\item By Lemma~\ref{lem-u0}, there are no left-recursive nonterminals, that is, no $A\in N$ with $A\Rightarrow^+A\balpha$ for some $\balpha\in(N\cup\Sigma)^*$.
	Hence, the relation $B<A$ if $A\Rightarrow^+B\balpha$ for some $\balpha\in(N\cup\Sigma)^*$ is a partial ordering. Thus, if we replace each rule
	of the form $A\to B\balpha$ by $A~\to~\bbeta_1\balpha~|~\bbeta_2\balpha~|~\ldots~|~\bbeta_k\balpha$ where $\bbeta_1,\ldots,\bbeta_k$ are all the alternatives of $B$,
	the process eventually terminates.
	\item Finally, if $X\neq S$ is a nonrecursive nonterminal with $X~\to~\balpha_1~|~\ldots~|~\balpha_n$ being all the alternatives of $X$,
	let us erase $X$ from $N$ and replace $X$ by one of the $\balpha_i$'s in all possible ways
	in the right-hand sides of the productions.
	Clearly, this transformation does not change $L(Y)$ for any $X\neq Y$ and reduces the number of nonterminals in $G$.
	Applying this transformation for each nonrecursive nonterminal different from $S$ in some arbitrary order now
	results in an ordinal grammar in normal form.
\end{enumerate}
Clearly, for each $X$ it is decidable whether it is recursive,
and if so, then an $u\in\Sigma^+$ can be computed for which $X\Rightarrow^+uX\balpha$ for some $\balpha\in(N\cup\Sigma)^*$.
Thus, $u_X$ can be chosen as the (still computable) primitive root~\cite{shyr1991free} of $u$.

We can show also the following:
\begin{lemma}
\label{lem-normal-form}
If $G=(N,\Sigma,P,S)$ is an ordinal grammar in normal form, then for each rule $X\to X_1\ldots X_n$ in $P$ one of the following holds:
\begin{enumerate}
	\item either the production is an escaping one (clearly, for a nonrecursive nonterminal this is the only option),
	\item or $X_i\approx X$ for a unique index $i\in[n]$, and $X_j\in\Sigma$ for each $j<i$.
\end{enumerate}
\end{lemma}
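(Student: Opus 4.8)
The plan is to treat the two clauses of case~2 as two instances of a single arithmetic overflow argument. A production that is not escaping is, by definition, a component production, so the set $S=\{i\in[n]:X_i\approx X\}$ of right-hand-side positions equivalent to $X$ is nonempty. I would then show that the presence of \emph{either} a second element of $S$ \emph{or} a nonterminal to the left of the first element of $S$ would force the order type of the right-hand side to strictly exceed $o(X)$. Since $L(X_1\ldots X_n)\subseteq L(X)$, this contradicts the product bound of Lemma~\ref{lem-product}, and both clauses follow at once.

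First I would fix the arithmetic. Write $\alpha=o(X)$. The normal-form hypothesis that every nonterminal generates an infinite language gives $\alpha\ge\omega$, and more generally $o(X_j)\ge\omega$ whenever $X_j$ is a nonterminal while $o(X_j)=1$ when $X_j$ is a terminal; in all cases $o(X_j)\ge 1$. By Lemma~\ref{lem-monotone}, $o(X_i)=\alpha$ for every $i\in S$. The engine is Lemma~\ref{lem-product} in its inequality form: since $X\to X_1\ldots X_n$ is a production, $o(X_n)\times o(X_{n-1})\times\cdots\times o(X_1)\le\alpha$. Abbreviating $\gamma_j=o(X_j)$, I would record that, because each $\gamma_j\ge 1$, replacing any collection of factors by $1$ can only decrease the product (by weak monotonicity of ordinal multiplication in each argument), so such replacements yield valid lower bounds.

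For \textbf{uniqueness}, suppose $i<k$ with $i,k\in S$. In $\gamma_n\times\cdots\times\gamma_1$ the factor $\gamma_k=\alpha$ lies to the left of $\gamma_i=\alpha$; lowering every other factor to $1$ leaves the lower bound $\gamma_k\times\gamma_i=\alpha\times\alpha=\alpha^2$, and $\alpha\ge\omega$ forces $\alpha^2>\alpha$, contradicting $\le\alpha$. Hence $S=\{i\}$ is a singleton. For the \textbf{terminal prefix}, let $i$ be this unique index and suppose some $X_j$ with $j<i$ were a nonterminal, so $\gamma_j\ge\omega$. Recalling that in Lemma~\ref{lem-product} the factor of index $1$ is the rightmost, the factor $\gamma_j$ lies to the \emph{right} of $\gamma_i=\alpha$; the same lowering leaves $\gamma_i\times\gamma_j\ge\alpha\times\omega>\alpha$, again a contradiction. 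Thus every $X_j$ with $j<i$ is terminal, which is exactly case~2.

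The one genuinely delicate point is the placement of factors in the noncommutative product. Two occurrences of $\alpha$ always multiply to $\alpha^2>\alpha$ regardless of their relative order, so uniqueness is robust; but a nonterminal of order type $\ge\omega$ only overflows when it sits to the \emph{right} of the factor $\alpha$, since $\alpha\times\omega>\alpha$ whereas $\omega\times\alpha$ may equal $\alpha$ (for instance $\omega\times\omega^\omega=\omega^{1+\omega}=\omega^\omega$). This asymmetry is precisely why the claim restricts the terminal requirement to positions $j<i$ — the outer (right) factors of the product — and I would make sure the bookkeeping respects it. No separate treatment of a nonrecursive $X$ is needed: such an $X$ admits no component production, so only case~1 can occur, exactly as the statement already remarks.
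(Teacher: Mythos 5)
Your proposal is correct and follows essentially the same route as the paper's proof: both invoke Lemma~\ref{lem-product} to bound the (reversed) product of the $o(X_j)$'s below by $o(X)\times o(X)$ in the uniqueness case, respectively by $o(X_i)\times o(X_j)$ with $o(X_j)>1$ in the terminal-prefix case, and both exceed $o(X)$, contradicting $o(X_n)\times\cdots\times o(X_1)\leq o(X)$. Your explicit bookkeeping of the noncommutative factor order (index $1$ rightmost, so a nonterminal at position $j<i$ lands to the \emph{right} of the factor $o(X_i)$) is accurate and merely makes precise what the paper's argument uses implicitly.
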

\begin{proof}
	Assume that there is a production $X\to X_1\ldots X_n$ for which none of the conditions hold.
	This can happen in the following two cases:
	\begin{enumerate}
		\item If there are at least two indices $i<j$ with $X_i\approx X_j\approx X$, then by Lemma~\ref{lem-product} we get $\alpha\times o(X)\times \beta\times o(X)\times \gamma\leq o(X)$
		for some nonzero ordinals $\alpha,\beta$ and $\gamma$, which is nonsense since if $G$ is in normal form, $L(X)$ is infinite, thus $o(X)>1$.
		\item Similarly, assume there is a unique index $i\in[n]$ with $X_i\approx X$ (thus, $X_j\prec X$ for each $j\neq i$) and $X_j$ is a nonterminal for some $j<i$.
		Then again by Lemma~\ref{lem-product} we get $\alpha\times o(X_i)\times\beta\times o(X_j)\times\gamma\leq o(X)=o(X_i)$ for some nonzero ordinals $\alpha,\beta$ and $\gamma$.
		Since with $X_j$ being a nonterminal we have $o(X_j)>1$, this is again a contradiction.
	\end{enumerate}
\end{proof}

\subsection{Operations on languages}
In this subsection we aim to show that whenever $\balpha\in(N\cup\Sigma)^*$ for some ordinal grammar $G=(N,\Sigma,P,S)$ in normal form,
both the supremum $\bigvee L(\balpha)$ and whether
$\bigvee L(\balpha)$ is a member of $L(\balpha)$ or not,
are computable and also a technical decidability lemma which will be used in the proof of
Theorem~\ref{thm-computable-recursive-themingeszishere}.

Let $X$ be a recursive nonterminal.
By Lemma~\ref{lem-u0}, for each $X\Rightarrow^+w$ we have $w<_s u_X^\omega$, so $u_X^\omega$ is an upper bound
of $L(X)$.
It is also clear that if $X\Rightarrow^+u_X^tXv$, then $X\Rightarrow^+u_X^{t\cdot k}Xv^k$ for every $k\geq 0$.
Hence for any integer $N>0$ there is a word $w\in L(X)$ (say, $w=u_X^{N\cdot t}w'v^N$ where $w'\in L(X)$ is an arbitrary
fixed word) such that $u_X^N<_\ell w$, and as $\mathop\bigvee\limits_{N\geq 0}u_X^N=u_X^\omega$, we immediately
get:
\begin{lemma}
\label{lem-recursive-supremum}
	Suppose $X$ is a recursive nonterminal. Then $\mathop\bigvee L(X)=u_X^\omega$.
	(Thus in particular, there is no largest element in $L(X)$, since $L(X)$ consists of finite words only.)
\end{lemma}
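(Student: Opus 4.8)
The plan is to prove the two inequalities separately: that $u_X^\omega$ is an upper bound of $L(X)$, and that it is the \emph{least} one. The upper bound direction is immediate from Lemma~\ref{lem-u0}: every $w\in L(X)$ arises from a derivation $X\Rightarrow^* w$, so $w<_s u_X^\omega$, and since $<_\ell\ =\ <_s\cup<_p$ this gives $w<_\ell u_X^\omega$. Hence $u_X^\omega$ bounds $L(X)$ from above.

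For the least-upper-bound direction I would first manufacture, for each integer $N$, a word of $L(X)$ that already overshoots the finite approximation $u_X^N$ of $u_X^\omega$. Since $X$ is recursive and usable, expanding the (productive) nonterminals standing to the left of the recursive occurrence of $X$ into terminals yields a derivation $X\Rightarrow^+ uXv$ with $u\in\Sigma^*$ nonempty and $v$ a sentential form; by Lemma~\ref{lem-u0} this prefix is a power $u=u_X^t$ with $t\geq 1$. Iterating this self-embedding $k$ times gives $X\Rightarrow^+ u_X^{kt}Xv^k$, and finally expanding the trailing $X$ (to some nonempty $w_0\in L(X)$) and the copies of $v$ into terminals produces a finite word $w\in L(X)$ that has $u_X^{kt}$ as a \emph{proper} prefix. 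Choosing $k$ with $kt>N$ then forces $u_X^N<_p u_X^{kt}<_p w$, so $u_X^N<_\ell w$.

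To finish, I would invoke the fact that $\bigl(\Sigma^{\leq\omega},<_\ell\bigr)$ is a complete lattice together with the identity $\mathop\bigvee\limits_{N\geq 0}u_X^N=u_X^\omega$. If $z$ is any upper bound of $L(X)$, then the words constructed above give $u_X^N<_\ell w\leq_\ell z$ for every $N$, so $z$ is an upper bound of $\{u_X^N:N\geq 0\}$ and therefore $u_X^\omega=\mathop\bigvee\limits_N u_X^N\leq_\ell z$. Combined with the first paragraph this yields $\mathop\bigvee L(X)=u_X^\omega$. The parenthetical claim then follows at once: a largest element of $L(X)$ would have to equal this supremum, but $u_X^\omega$ is a genuine $\omega$-word (as $u_X\neq\varepsilon$) while every element of $L(X)$ is finite.

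I expect the only genuinely delicate point to be extracting the terminal-prefixed self-embedding $X\Rightarrow^+ u_X^tXv$ from the abstract recursiveness $X\Rightarrow^+\balpha X\bbeta$: one must use productivity of the symbols of $\balpha$ to realize a purely terminal prefix and then appeal to Lemma~\ref{lem-u0} to recognize it as a power of $u_X$. The remaining steps—iterating the derivation and the limit argument turning ``exceeds every $u_X^N$'' into ``$u_X^\omega$ is least''—are routine once completeness of the lattice is used.
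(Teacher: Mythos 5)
Your proposal is correct and follows essentially the same route as the paper: the upper bound comes from Lemma~\ref{lem-u0}, and the least-upper-bound half is obtained by pumping the self-embedding $X\Rightarrow^+u_X^tXv$ to produce, for each $N$, a word of $L(X)$ exceeding $u_X^N$, then using $\mathop\bigvee\limits_{N\geq 0}u_X^N=u_X^\omega$. The only difference is presentational: you spell out the extraction of the terminal prefix and the final supremum argument, which the paper leaves implicit.
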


It is obvious that for any $a\in\Sigma$ we have $\bigvee L(a)=a$ and $a\in L(a)$.
For the case of nonrecursive nonterminals (that can be at most $S$)
we need to handle the operations union and product. For union,
we of course have $\bigvee (K\cup L)~=~\bigvee K\vee\bigvee L$ and this element $u$ belongs to $K\cup L$ if
and only if $u=\bigvee K$ and $u\in K$, or $u=\bigvee L$ and $u\in L$ holds.

For product, we state a useful property first:
\begin{proposition}
\label{prop-strict}
If $L$ is prefix-free and $\bigvee L$ exists, then either $L<_s\bigvee L$, or $\bigvee L\in L$ holds.
\end{proposition}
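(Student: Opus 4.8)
The plan is to fix $s=\bigvee L$ and reduce everything to a single claim: \emph{no word of $L$ is a proper prefix of $s$}. Granting this, the proposition follows quickly. Since $s$ is an upper bound of $L$, every $v\in L$ satisfies $v\leq_\ell s$, so by totality of $<_\ell$ (and since $<_\ell$ is the union $<_s\cup<_p$) either $v<_s s$, or $v<_p s$, or $v=s$. The middle case is excluded by the claim, so each $v\in L$ satisfies $v<_s s$ or $v=s$. If some word equals $s$, then $\bigvee L\in L$; otherwise every word is $<_s s$, i.e.\ $L<_s\bigvee L$. These are exactly the two stated alternatives.

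To prove the claim I would argue by contradiction, assuming $v\in L$ with $v<_p s$. First I note that $v$ cannot be an upper bound of $L$: from $v<_p s$ we get $v<_\ell s$, and if $v$ bounded $L$, then the least upper bound $s$ would satisfy $s\leq_\ell v$, contradicting $v<_\ell s$. Hence there is a $u\in L$ with $v<_\ell u$. Again by totality this means $v<_s u$ or $v<_p u$; but $v<_p u$ would make $v$ a proper prefix of another word of $L$, impossible since $L$ is prefix-free. So $v<_s u$.

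Finally I would transfer this relation from $v$ to $s$. Writing the first disagreement of $v$ and $u$ as $v=u_1av_2$ and $u=u_1bu_2$ with $a<b$, the differing position lies strictly inside $v$. As $v$ is a prefix of $s$, the word $s$ coincides with $v$ on this range and therefore also begins $u_1a\cdots$; comparing with $u=u_1b\cdots$ gives $s<_s u$, hence $u>_\ell s$, contradicting that $s$ is an upper bound of $L$. This closes the contradiction and establishes the claim.

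The one step needing genuine care is this last transfer: I must ensure that the first point where $v$ and $u$ differ falls within $v$, so that it is preserved when $v$ is lengthened to $s$. This is precisely what distinguishes $v<_s u$ from $v<_p u$, and it is here that prefix-freeness does the essential work. Everything else is routine bookkeeping with the totality of $<_\ell$, the least-upper-bound property, and the final case split on whether $s\in L$.
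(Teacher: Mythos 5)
Your proof is correct and follows essentially the same route as the paper's: both arguments locate a word of $L$ that is a proper prefix of $\bigvee L$, use prefix-freeness to produce a strictly larger word of $L$ related by $<_s$, and transfer that strict inequality past the prefix to contradict that $\bigvee L$ is an upper bound. The only cosmetic difference is that you isolate the claim ``no word of $L$ is a proper prefix of $\bigvee L$'' up front and invoke the least-upper-bound property directly, where the paper argues via the absence of a largest element.
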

\begin{proof}
Assume neither of the two cases hold for the supremum of $L$. Then, since $\bigvee L\notin L$, we have
$L<_\ell \bigvee L$. Thus, since $L\nless_s\bigvee L$, there is a word $u\in L$ with $u\nless_s\bigvee L$
and $u<_\ell \bigvee L$, hence $u<_p\bigvee L$. But since $L$ is prefix-free, there is no word $v\in L$
with $u<_pv$, thus -- as there is no largest element in $L$ by $\bigvee L\notin L$ -- there is a word
$v\in L$ with $u<_s v$. But as $u<_p\bigvee L$, this yields $\bigvee L<_s v$, a contradiction since
$v<_\ell\bigvee L$ has to hold.
\end{proof}
This proposition entails the following:
\begin{corollary}
\label{cor-strict}
If $K$ and $L$ are nonempty prefix-free languages and both $\bigvee L$ and $\bigvee K$ exist, then
\begin{align*}
\bigvee (KL) &= \begin{cases}
\bigvee K&\hbox{ if }K<_s\bigvee K;\\
\bigvee K\cdot \bigvee L&\hbox{ otherwise},
\end{cases}
\end{align*}
and $\bigvee(KL)\in KL$ if and only if $K\in \bigvee K$ and $L\in\bigvee L$.
\end{corollary}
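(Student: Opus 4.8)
The plan is to derive everything from the dichotomy supplied by Proposition~\ref{prop-strict}: for the prefix-free $K$ with supremum, exactly one of $K<_s\bigvee K$ and $\bigvee K\in K$ holds (they exclude each other because $<_s$ is irreflexive). This is precisely the case split appearing in the claimed formula, so I would treat the two cases in turn. Throughout I would use two elementary one-sided monotonicity facts about a fixed finite word $w$: appending, where $u<_s w'$ implies $uv<_s w'$ for every $v$ (the strict disagreement already lies inside $u$); and prepending, where $v\leq_\ell v'$ implies $wv\leq_\ell wv'$ (the witness for $<_s$ or $<_p$ is merely shifted right by $|w|$), which also yields the cancellation rule $wv<_\ell wv'\iff v<_\ell v'$.

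In the first case, $K<_s\bigvee K$, I would show $\bigvee(KL)=\bigvee K$. Every $u\in K$ satisfies $u<_s\bigvee K$, so by the appending fact $uv<_s\bigvee K$ for every $v\in L$; hence $\bigvee K$ bounds $KL$ from above. It is the least bound because for any $w<_\ell\bigvee K$ the word $w$ fails to bound $K$, so some $u\in K$ has $u>_\ell w$, and then $uv\geq_\ell u>_\ell w$ for any $v\in L$ (here I use $L\neq\emptyset$). In the second case, $\bigvee K\in K$, I write $k^{*}=\bigvee K$ (the maximum of $K$) and $m^{*}=\bigvee L$, and prove $\bigvee(KL)=k^{*}m^{*}$. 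The core is the identity $\bigvee(k^{*}L)=k^{*}m^{*}$: the prepending fact makes $k^{*}m^{*}$ an upper bound of $k^{*}L$, and for minimality I would take $w'<_\ell k^{*}m^{*}$ and split on how $w'$ compares with $k^{*}$, the decisive subcase being $k^{*}\leq_p w'$, where cancellation turns $w'=k^{*}w''$ into $w''<_\ell m^{*}$ and the defining property of $m^{*}=\bigvee L$ produces $v\in L$ with $k^{*}v>_\ell w'$ (the remaining subcases $w'<_s k^{*}$ and $w'<_p k^{*}$ make any $k^{*}v$ exceed $w'$, while $k^{*}<_s w'$ is incompatible with $w'<_\ell k^{*}m^{*}$). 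Since $k^{*}L\subseteq KL$ this gives $\bigvee(KL)\geq_\ell k^{*}m^{*}$; conversely $k^{*}m^{*}$ bounds all of $KL$, because for $uv\in KL$ with $u<_\ell k^{*}$ prefix-freeness of $K$ excludes $u<_p k^{*}$ and forces $u<_s k^{*}$, whence $uv<_s k^{*}m^{*}$, and for $u=k^{*}$ the prepending fact gives $k^{*}v\leq_\ell k^{*}m^{*}$.

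For the membership statement, read as $\bigvee(KL)\in KL\iff\bigvee K\in K\text{ and }\bigvee L\in L$, the backward direction is immediate: these hypotheses put us in the second case, where $\bigvee(KL)=k^{*}m^{*}$ with $k^{*}\in K$ and $m^{*}\in L$, so the supremum lies in $KL$. For the forward direction I would first exclude the first case: there $\bigvee K\notin K$, and since $K<_s\bigvee K$ prevents any $u\in K$ from being a prefix of $\bigvee K$, the word $\bigvee K$ admits no factorisation $uv$ with $u\in K$, $v\in L$, so $\bigvee(KL)=\bigvee K\notin KL$. Hence membership forces the second case and $\bigvee K\in K$; writing $\bigvee(KL)=k^{*}m^{*}=u_0v_0$ with $u_0\in K$, $v_0\in L$, the option $u_0<_\ell k^{*}$ is impossible, because prefix-freeness again upgrades it to $u_0<_s k^{*}$ and then $u_0v_0$ and $k^{*}m^{*}$ disagree at the position where $u_0$ and $k^{*}$ differ, contradicting the equality; so $u_0=k^{*}$ and cancellation gives $m^{*}=v_0\in L$. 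I expect the minimality half of $\bigvee(k^{*}L)=k^{*}m^{*}$ to be the main obstacle, since it requires a careful case analysis comparing a possibly infinite $w'$ with $k^{*}$ and repeated use of prefix-freeness to sharpen lexicographic inequalities into $<_s$ ones.
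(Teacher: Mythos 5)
Your proof is correct and follows essentially the same route as the paper: the case split from Proposition~\ref{prop-strict}, an upper-bound plus least-upper-bound argument in each case, with prefix-freeness used to upgrade $u<_\ell\bigvee K$ to $u<_s\bigvee K$. The only difference is that you work out in full the membership claim (and the exclusion of a factorisation of $\bigvee K$ in the first case), which the paper dismisses as ``clear.''
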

\begin{proof}
If $K<_s\bigvee K$, then $K\Sigma^*<_s\bigvee K$, so $\bigvee K$ is an upper bound of $KL$ in that case. To see
it's the smallest one, assume $u<_\ell\bigvee K$. Since $\bigvee K$ is the supremum of $K$ with respect to
the total ordering $<_\ell$, this means $u<_\ell v$ for some $v\in K$. But for this $v$ and an arbitrary $w\in L$
we still have $u<_\ell vw$, hence $u$ cannot be an upper bound of $KL$. Thus, $\bigvee K=\bigvee(KL)$.

If $u=\bigvee K\in K$, then for any word $v\in K$ and $w\in L$ we have either $v<_s u$, in which case $vw<_s ux$
for any word $x\in\Sigma^{\leq\omega}$, or $v=u$, in which case $vw\leq_\ell u\bigvee L$ since $w\leq_\ell \bigvee L$.
Thus, $\bigvee K\cdot\bigvee L$ is an upper bound of $\bigvee(KL)$.
Again, if $v<_\ell u\bigvee L$ for some $v$,
then either $v<_\ell u$ in which case $v<_\ell uw\in KL$ for any $w\in L$, thus
$v$ cannot be the supremum of $KL$, or $u<_pv$ in which case $v=uw$ for some $w$ with
$w<_\ell\bigvee L$.
This in turn implies the existence of some $w'\in L$ with $w<_\ell w'$, thus $v=uw<_\ell uw'\in KL$, hence $v$ cannot be an upper bound of $KL$,
showing the claim.

The statement on membership is clear.
\end{proof}

\begin{corollary}
\label{cor-sup-computable}
For any ordinal grammar $G=(N,\Sigma,P,S)$ in normal form and $\balpha\in(N\cup\Sigma)^*$,
the supremum $\bigvee L(\balpha)$ is computable and one of the following cases holds:
\begin{itemize}
	\item $\bigvee L(\balpha)=u$ for some finite $u\in\Sigma^*$, and $u\in L(\balpha)$;
	\item $\bigvee L(\balpha)=uv^\omega$ for some finite $u\in\Sigma^*$ and $v\in\Sigma^+$, and 
	      (of course) $uv^\omega\notin L(\balpha)$.
\end{itemize}
\end{corollary}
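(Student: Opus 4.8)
The plan is to prove Corollary~\ref{cor-sup-computable} by structural induction on the sentential form $\balpha$, reducing the general case to the cases already handled: single letters, single recursive nonterminals, the (at most one) nonrecursive nonterminal $S$, and products of these. The two stated output cases correspond exactly to the dichotomy in Proposition~\ref{prop-strict}: either the supremum is attained (a finite word in $L(\balpha)$), or it is an $\omega$-word of the eventually-periodic form $uv^\omega$. So the real content is to show that (a) every $\bigvee L(\balpha)$ has one of these two shapes, and (b) the defining data $u$ (and $v$ when applicable) are computable, together with the bit recording whether the supremum is a member.

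First I would establish the base cases. For a terminal $a\in\Sigma$ we have $\bigvee L(a)=a\in L(a)$, the attained case. For a recursive nonterminal $X$, Lemma~\ref{lem-recursive-supremum} gives $\bigvee L(X)=u_X^\omega$, which is precisely the periodic case with $u=\varepsilon$ and $v=u_X$, and since $L(X)$ contains only finite words while $u_X^\omega$ is an $\omega$-word, the supremum is not a member. Here $u_X$ is computable as noted before the lemma (the primitive root of a witnessing loop word), so this case is effective. The remaining atomic case is $\balpha=S$ with $S$ nonrecursive: by the normal-form construction, $S$'s productions are escaping, so $L(S)$ is a finite union of languages $L(\bbeta)$ with $\bbeta\prec S$; I would treat this via the union and product steps below applied to strictly smaller components, using well-founded induction on $\preceq$.

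Next I would handle the two combining operations. For \emph{union}, as already observed $\bigvee(K\cup L)=\bigvee K\vee\bigvee L$, and the maximum of two words each of the form (finite) or $uv^\omega$ is again of one of these two forms and is computable, provided we can compare eventually-periodic words lexicographically; this comparison is routine since $uv^\omega$ is a computable ultimately periodic $\omega$-word and the lexicographic order on such words is decidable. The membership bit follows the rule stated in the text: the supremum lies in $K\cup L$ iff it equals $\bigvee K$ with $\bigvee K\in K$, or equals $\bigvee L$ with $\bigvee L\in L$. For \emph{product} $\balpha=X_1\balpha'$, I would invoke Corollary~\ref{cor-strict} with $K=L(X_1)$ and $L=L(\balpha')$, both nonempty and prefix-free (the latter by induction). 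The corollary splits on whether $K<_s\bigvee K$: the condition $K<_s\bigvee K$ is exactly the negation of $\bigvee K\in K$ by Proposition~\ref{prop-strict}, so it is decidable from the already-computed membership bit for $K$. In the first branch $\bigvee(KL)=\bigvee K$; in the second $\bigvee(KL)=\bigvee K\cdot\bigvee L$, and since $\bigvee K$ is then a finite word $u\in\Sigma^*$ (it is attained), prepending $u$ to $\bigvee L$ preserves the required shape and is computable, while membership is governed by the corollary's final clause.

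The \textbf{main obstacle} is verifying that the recursion terminates and stays within the two prescribed normal forms. The concern is the product branch where $\bigvee(KL)=\bigvee K\cdot\bigvee L$: this is only reached when $\bigvee K=u$ is a finite attained word, so concatenation with the (finite or periodic) word $\bigvee L$ keeps us inside $\Sigma^*$ or $\Sigma^*v^\omega$ — no $\omega$-word ever gets a nonempty suffix appended, which would otherwise break the $uv^\omega$ form. For termination I would order the recursion by well-founded induction on the $\preceq$-structure for the nonterminal/escaping expansions, combined with ordinary structural recursion on the length of $\balpha$ for the product decomposition, so that each recursive call is on a strictly smaller object. The one subtlety worth spelling out is the nonrecursive start symbol: since $S$ may appear but is not recursive and all its productions escape to $\prec S$ components, the induction on $\preceq$ bottoms out at recursive nonterminals and terminals, for which the base cases already give the answer.
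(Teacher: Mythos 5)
Your proposal is correct and follows essentially the same route as the paper: base cases from $\bigvee L(a)=a$ and Lemma~\ref{lem-recursive-supremum}, the product handled by the recursion of Corollary~\ref{cor-strict} (concatenating only when the leading factor's supremum is a finite attained word, which preserves the $uv^\omega$ shape), and the nonrecursive start symbol $S$ handled as a finite union over its alternatives. The only difference is presentational: you make explicit the decidability of comparing ultimately periodic $\omega$-words for the union step, which the paper leaves implicit.
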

\begin{proof}
	We already established $\bigvee L(X)=u_X^\omega$ when $X$ is a recursive nonterminal and that
	$\bigvee L(a)=a\in L(a)$ for terminals $a\in \Sigma$.
	
	Also, for any $\balpha=X_1X_2\ldots X_n\in (N\cup\Sigma)^+$ we can compute $\bigvee L(\balpha)$
	with the recursion
	\begin{align*}
	\bigvee L(X_1\ldots X_n) &=\begin{cases}
	\varepsilon&\hbox{if }n=0\\
	\bigvee(X_1)&\hbox{if }n>0\hbox{ and }\bigvee X_1=uv^\omega\hbox{ for some }u\in\Sigma^*,v\in\Sigma^+\\
	u\cdot\bigvee L(X_2\ldots X_n)&\hbox{if }n>0\hbox{ and }\bigvee X_1=u\in\Sigma^*
	\end{cases}
	\end{align*}
	using Corollary~\ref{cor-strict}.
	
	Then, if $X=S$ is a nonrecursive nonterminal and $X~\to~\balpha_1~|~\balpha_2~|\ldots|~\balpha_n$
	are all the alternatives for $X$, then we have $\bigvee L(X)=\bigvee\limits_{i=1}^n L(\balpha_i)$,
	which yields an inductive proof for the only possible nonrecursive nonterminal $S$.
\end{proof}
Concluding the subsection, we show the following technical lemma:
\begin{lemma}
	\label{lem-transducer}
	It is decidable for any context-free language $L\subseteq\Sigma^*$ and words $u,v$, whether there exists an integer
	$N\geq 0$ such that $uv^N\Sigma^*\cap L~=~\emptyset$. (If so, then $uv^M\Sigma^*\cap L=\emptyset$ for each $M\geq N$.)
\end{lemma}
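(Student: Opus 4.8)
The plan is to reduce the statement to deciding finiteness of an auxiliary context-free language, exploiting the monotonicity that the lemma itself hints at. First I would record the inclusion $uv^M\Sigma^*\subseteq uv^N\Sigma^*$ whenever $M\geq N$, which is immediate from $uv^M=uv^N v^{M-N}$; this already gives the parenthetical claim and shows that the set $S=\{n\geq 0:uv^n\Sigma^*\cap L\neq\emptyset\}$ is downward closed. Consequently an integer $N$ with $uv^N\Sigma^*\cap L=\emptyset$ exists \emph{if and only if} some $n\notin S$, i.e. if and only if $S\neq\mathbb{N}_0$, i.e. if and only if $S$ is finite (a downward closed subset of $\mathbb{N}_0$ is either a finite initial segment or all of $\mathbb{N}_0$). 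So it suffices to decide whether $S$ is finite.

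Next I would rephrase membership in $S$ as a prefix condition: $uv^n\Sigma^*\cap L\neq\emptyset$ holds exactly when $uv^n$ is a prefix of some word of $L$, that is, $uv^n\in\mathrm{Pref}(L)$, where $\mathrm{Pref}(L)$ denotes the set of all prefixes of words of $L$. The facts I would invoke are all effective and standard for context-free languages: the prefix closure $\mathrm{Pref}(L)$ of a context-free language is context-free (for instance via the rational transduction that copies an input word up to a nondeterministically chosen point and then discards the remaining suffix), the class of context-free languages is effectively closed under intersection with a regular language, and finiteness of a context-free language is decidable. I would therefore form the context-free language $K=\mathrm{Pref}(L)\cap uv^*$, whose members are exactly the words $uv^n$ with $n\in S$, and decide whether $K$ is finite.

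The only point requiring care is the passage between ``$S$ finite'' and ``$K$ finite'', which uses that the map $n\mapsto uv^n$ is injective; this holds precisely when $v\neq\varepsilon$, so I would treat $v=\varepsilon$ as a separate trivial case, in which $uv^N\Sigma^*=u\Sigma^*$ is independent of $N$ and the question collapses to deciding emptiness of $u\Sigma^*\cap L$ (again decidable via intersection with a regular language and the emptiness test). For $v\neq\varepsilon$ the injectivity yields a length-preserving bijection between $K$ and $S$, so $K$ is finite iff $S$ is finite iff the desired $N$ exists. The main obstacle here is not any single deep argument but rather pinning down this chain of equivalences correctly — in particular the monotonicity and downward-closedness observation that converts the existential ``some $N$ works'' into finiteness of a single context-free language — after which every remaining step is a standard effective closure or decision procedure for context-free languages.
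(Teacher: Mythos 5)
Your proof is correct and follows essentially the same route as the paper: both arguments reduce the question to deciding finiteness of a context-free language obtained from $L$ by an effective rational transduction (the paper uses a generalized sequential mapping onto $a^*$ that counts the $v$-repetitions after the prefix $u$, while you use $\mathrm{Pref}(L)\cap uv^*$), together with the same monotonicity observation. Your explicit handling of the degenerate case $v=\varepsilon$ is a small point of extra care that the paper's construction silently assumes away.
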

\begin{proof}
	Let us define the following generalized sequential mappings $f,g:\Sigma^*\to a^*$: let
	\begin{align*}
	f(x)&=\begin{cases}
	g(y)&\hbox{if }x=uy\\
	\varepsilon&\hbox{otherwise,}
	\end{cases}
	&
	g(x)&=\begin{cases}
	a\cdot g(y)&\hbox{if }x=vy\\
	\varepsilon&\hbox{otherwise.}
	\end{cases}
	\end{align*}
	We have that if $uv^N\Sigma^*\cap L$ is nonempty, then $f(L)$ contains some word of length at least $N$,
	and also, if $a^N\in f(L)$, then $uv^N\Sigma^*\cap L$ is nonempty.
	Thus, there is such an integer $N$ satisfying the condition of the lemma if and only if $f(L)$ is finite,
	which is decidable, since the class of context-free languages is effectively closed under generalized
	sequential mappings~\cite{Ginsburg:1966:MTC:1102023}.
\end{proof}		

\subsection{The order type of recursive nonterminals}

In this subsection we show that $o(X)$ is computable, whenever $X$ is a recursive nonterminal of an ordinal grammar $G=(N,\Sigma,P,S)$.

Clearly, for each $a\in\Sigma$ we have $o(L(a))=1$. We will apply induction on the \emph{height} of $X$, defined as the length of the longest
chain $X_1\prec X_2\prec\ldots\prec X_n=X$ with each $X_i$ in $N\cup\Sigma$. (Thus, the height of the terminals is $0$, nonterminals have
positive height.)

Since $X$ is a recursive nonterminal, by Lemma~\ref{lem-u0} there is a (shortest, computable) nonempty word $u_X$ such that
\begin{enumerate}
	\item $w<_su_X^\omega$ for each $w\in L(X)$;
	\item whenever $X\Rightarrow^+uX\balpha$ for some $u\in\Sigma^*$ and $\balpha\in(N\cup\Sigma)^*$, then $u\in u_X^+$.	
\end{enumerate}
This also implies that whenever $X$ and $Y$ are nonterminals belonging to the same component, then there is a unique word $u_{(X,Y)}<_pu_X$ such that $u_X^\omega = u_{(X,Y)}u_Y^\omega$.
Moreover we have:
\begin{proposition}
	\label{prop-betakisebb}
	If $Y\to\bbeta$ is an escaping production for $X\approx Y$, then $u_{(X,Y)}\cdot L(\bbeta)<_su_X^\omega$.
\end{proposition}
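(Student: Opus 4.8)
The plan is to reduce the claim to Lemma~\ref{lem-u0} applied to the nonterminal $Y$, exploiting that the strict order $<_s$ is invariant under prepending a fixed prefix.

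First I would note that $Y$ is itself recursive, so that the word $u_Y$ and Lemma~\ref{lem-u0} apply to it: since $X\approx Y$ we have both $X\Rightarrow^*\balpha Y\bbeta$ and $Y\Rightarrow^*\boldsymbol{\gamma} X\boldsymbol{\delta}$ for suitable sentential forms, and composing these (using that $X$ is recursive) yields $Y\Rightarrow^+\boldsymbol{\gamma}'Y\boldsymbol{\delta}'$. In any case $u_Y$ exists, as it is already presupposed by the identity $u_X^\omega=u_{(X,Y)}u_Y^\omega$ fixed just before the statement, and Lemma~\ref{lem-u0} guarantees $w<_s u_Y^\omega$ for every word $w$ with $Y\Rightarrow^* w$, that is, for every $w\in L(Y)$.

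Next, since $Y\to\bbeta$ is a production of $G$, we have $L(\bbeta)\subseteq L(Y)$, so each $w\in L(\bbeta)$ satisfies $w<_s u_Y^\omega$ by the previous step. The structural fact I would then invoke is the left-invariance of $<_s$: directly from the definition of the strict ordering, $p<_s q$ implies $rp<_s rq$ for any fixed $r\in\Sigma^*$, since the witnessing common prefix of $p$ and $q$ is merely extended on the left by $r$; this remains valid when $q$ is an $\omega$-word. Applying this with $r=u_{(X,Y)}$ gives $u_{(X,Y)}\,w<_s u_{(X,Y)}\,u_Y^\omega$ for every $w\in L(\bbeta)$.

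Finally I would substitute the defining identity $u_{(X,Y)}\,u_Y^\omega=u_X^\omega$, concluding $u_{(X,Y)}\,w<_s u_X^\omega$ for each $w\in L(\bbeta)$, which is exactly the assertion $u_{(X,Y)}\,L(\bbeta)<_s u_X^\omega$. There is no genuine obstacle here: the only point that needs care is verifying the left-invariance of $<_s$ against its definition (routine), and noting that the degenerate case $u_{(X,Y)}=\varepsilon$ (when $u_X$ and $u_Y$ share a primitive root) is harmless, the claim then collapsing directly to Lemma~\ref{lem-u0} applied to $Y$.
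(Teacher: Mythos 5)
Your proof is correct and follows essentially the same route as the paper's: the paper likewise observes $L(\bbeta)\subseteq L(Y)<_s u_Y^\omega$ (Lemma~\ref{lem-u0} applied to $Y$) and then left-multiplies by $u_{(X,Y)}$, using $u_{(X,Y)}u_Y^\omega=u_X^\omega$. Your extra remarks on the left-invariance of $<_s$ and the case $u_{(X,Y)}=\varepsilon$ are sound but just make explicit what the paper leaves implicit.
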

\begin{proof}
	In this case, $X\Rightarrow^+u_{(X,Y)}Y\balpha$ for some sentential form $\balpha$. Since $L(\bbeta)\subseteq L(Y)<_s u_Y^\omega$,
	we get $u_{(X,Y)}\cdot L(\bbeta)<_s u_{(X,Y)}u_Y^\omega=u_X^\omega$.
\end{proof}

Now by Lemma~\ref{lem-normal-form} we can deduce that any (leftmost) derivation from $X$ has the form
\begin{align}
\label{eq-leftmost}
X&\Rightarrow~u_1X_1\balpha_1~\Rightarrow~u_1u_2X_2\balpha_2\balpha_1~\Rightarrow~\ldots\\&\Rightarrow~u_1u_2\ldots u_nX_n\balpha_n\ldots\balpha_2\balpha_1~\Rightarrow~u_1u_2\ldots u_n\bbeta\balpha_n\ldots\balpha_2\balpha_1
~\Rightarrow^*~w\nonumber
\end{align}
for some integer $n\geq 0$, nonempty words $u_1,\ldots,u_n\in\Sigma^+$ with $u_1\ldots u_n<_pu_X^\omega$, sentential forms $\balpha_1,\ldots,\balpha_n, \bbeta \in(N\cup\Sigma)^*$ with
$\bbeta\prec X$, $X_i\approx X$ and $\balpha_i\prec X$ for each $i\in[n]$.

By induction, $o(\bbeta)$ is computable
(applying Lemma~\ref{lem-product}) for each possible $\bbeta\prec X_i$ with $X_i\approx X$ and production $X_i\to \bbeta$.
Moreover, $o(\balpha)$ is also computable for each $\balpha\prec X_i$ with a production $X_i\to u_iX_{i+1}\balpha$, $X_{i+1}\approx X_i$
as there are only finitely many such productions.

Let $v_1<_s v_2<_s\ldots<_sv_\ell$ be the complete enumeration of those words $v_i$ with $v_i<_s u_X$ having the form $v_i=ua$ with $u<_pu_X$.

Observe that $L=L(X)$ is the disjoint union of languages of the form $u_X^Nv_i\Sigma^*~\cap~L(X)$, with $N\geq 0$ and $1\leq i\leq \ell$.
Moreover, whenever $u\in u_X^Nv_i\Sigma^*$ and $v\in u_X^Mv_j\Sigma^*$, then $N<M$ or ($N=M$ and $i<j$) implies $u<_sv$. Thus,
these languages form an $\omega$-sequence with respect to the lexicographic ordering and we can write $L$ as
\[L~=~L_1+L_2+L_3+\ldots\]
We will construct an increasing sequence of ordinals
\[o_1~\leq~o_2~\leq~o_3~\leq~\ldots\]
such that the following hold:
\begin{itemize}
	\item for each $i\geq 1$, there is a $j\geq 1$ with $o(L_i)\leq o_j$ and
	\item for each $j\geq 1$, there is an $i\geq 1$ with $o_j\leq o(L_i)$.
\end{itemize}
This implies $o(L)~=~o_1~+~o_2~+~o_3~+~\ldots$. Indeed: by the first condition we have
\[o(L) ~=~o(L_1)~+~o(L_2)~+~\ldots\\
~\leq~o_{f(1)}~+~o_{f(2)}~+~\ldots
\]
for some indices $f(1)$, $f(2)$ and so on. Let us define for each $j$ the index $g(j)$ as follows: $g(1)=f(1)$
and for each $j>1$, let $g(j)=\max\{g(j-1)+1,f(j)\}$. Then we have
$o(L)\leq o_{g(1)}+o_{g(2)}+\ldots$ and $g(1)<g(2)<\ldots$. Thus, $o(L)\leq o_1+o_2+\ldots$ holds (as the former order type is a sub-order type of the latter), 
the other direction being symmetric.

Let us now consider one such language $L_t$. Then, $L_t$ is a finite union of languages of the form
\begin{align}
\label{eqn-theyhavethisform}
u_1u_2\ldots u_nL'L(\balpha_n)\ldots L(\balpha_2)L(\balpha_1)
\end{align}
where $u_1\ldots u_n<_p (u_X)^N$ for some $N$ depending only on $t$, moreover, applying Proposition~\ref{prop-betakisebb}
we get that each such $L'$
has the form $\bigl((u_1\ldots u_n)^{-1}u_X^Nv_j\bigr)\Sigma^*~\cap~L(\bbeta)~=~u_{X'}^{M}v\Sigma^*~\cap~L(\bbeta)$, and for each $i\geq 0$
there is a production of the form $X_i\to u_iX_{i+1}\balpha_i$ (recall that due to the normal form
each $u_i$ is nonempty) for some nonterminals $X_i\approx X$,
$X_1=X$ and $X_{n+1}\to\bbeta$ with $\bbeta\prec X$. Clearly, for any fixed $N$ and $v_i$, there are only finitely many such choices.

We do not have to explicitly compute the order type of each such $L_t$, due to the following lemma:
\begin{lemma}
\label{lem-uniodeg}
	Assume $o_1\leq o_2\leq \ldots$ is a sequence of ordinals and $K$, $L$ are languages with
	$\deg(o(L))$, $\deg(o(K))<\deg\bigl(\bigvee o_i\bigr)$. Then $o(K\cup L)<o_j$ for some index $j$.
\end{lemma}
\begin{proof}
	Without loss of generality, let $o(L)\leq o(K)$.	
	By Theorem~\ref{thm-union} we have that $o(K\cup L)<\omega^{\deg(o(K))}\times T$ for some integer $T$.	
	It suffices to show that for each integer $T>0$, there exists an $o_i$ with 
	$o_i> \omega^{\deg(o(K))}\times T$. Assume to the contrary that each $o_i$ is at most $\omega^{\deg(o(K))}\times T'$
	for some integer $T'$. But then, $\bigvee o_i\leq \omega^{\deg(o(K))}\times T'$ and thus  $\deg(\bigvee o_i)\leq \deg(o(K))$,
	a contradiction.
\end{proof}

Equipped by our lemmas we are ready to prove the (technically most involved)
main result of the subsection:
\begin{theorem}
	\label{thm-computable-recursive-themingeszishere}
	Assume $G$ is an ordinal grammar in normal form and $X$ is a recursive nonterminal.
	Let $o_\balpha$ be the maximal order type of some $L(\balpha)$ for which a component production of the form $X'\to uX''\balpha$ exists in $G$ for some $X\approx X'$,
	and $o_\bbeta$ be the maximal order type of some $L(\bbeta)$ with $X'\to \bbeta$ being an escaping production of $G$ with $X'\approx X$.
	
	Then the order type of $L(X)$ is:
	\begin{enumerate}
		\item ${(o_\balpha)}^\omega$ if $o_\bbeta<{( o_\balpha)}^\omega$;
		\item $o_\bbeta$ if $o_\bbeta=\omega^{\deg(o_\bbeta)}$ and for each escaping production $X'\to\bbeta$ with $\deg(o(L(\bbeta)))=\deg(o_\bbeta)$, the
		language $u_{X'}^N\Sigma^*\cap L(\bbeta)$ is nonempty for infinitely many integers $N\geq 0$;
		\item $o_\bbeta\times\omega$, otherwise.
	\end{enumerate}	
\end{theorem}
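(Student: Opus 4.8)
The plan is to prove matching lower and upper bounds for $o(L(X))$ in each of the three cases. For the upper bounds I work with the decomposition $L(X)=L_1+L_2+\cdots$ into the blocks $u_X^Nv_i\Sigma^*\cap L(X)$ introduced above. By Lemma~\ref{lem-product} each of the finitely many languages of the form~(\ref{eqn-theyhavethisform}) that make up a block $L_t$ has order type $o(L(\balpha_1))\times\cdots\times o(L(\balpha_n))\times o(L')$, where every $\balpha_i$ stems from a component production, so $o(L(\balpha_i))\le o_\balpha$, and $L'=u_{X'}^Mv\Sigma^*\cap L(\bbeta)$ is a \emph{chunk} of an escaping language, so $o(L')\le o_\bbeta$. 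Because each $u_i$ is nonempty in normal form, the recursion depth $n$ inside a fixed block is bounded; hence $L_t$ really is a finite union and Theorem~\ref{thm-union} lets me read off $\deg(o(L_t))$ as the largest degree of a piece. Writing a piece degree as $s+e$ with $s=\sum_i\deg(o(L(\balpha_i)))\le\deg(o_\balpha)\times n$ and $e=\deg(o(L'))$ (if $o_\balpha<\omega$ then $s=0$ and everything below is finite-degree bookkeeping, so I may assume $o_\balpha\ge\omega$, where $\deg(o_\balpha^\omega)=\deg(o_\balpha)\times\omega=\omega^{\deg(\deg(o_\balpha))+1}$ is additively principal), a short distinction—left-absorption when $e\ge\deg(o_\balpha)\times\omega$, principality of $\deg(o_\balpha)\times\omega$ when $e<\deg(o_\balpha)\times\omega$—controls every block degree in the case analysis below.

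For the lower bounds I argue directly, without evaluating block types. Since $L(\bbeta)\subseteq L(X')$ for the escaping production attaining $o_\bbeta$, and $o(X')=o(X)$ by Lemma~\ref{lem-monotone}, we get $o(L(X))\ge o_\bbeta$. For the recursion, strong connectivity of the component of $X$ lets me iterate the component production attaining $o_\balpha$ as often as I like before escaping; the corresponding sublanguage of $L(X)$ has order type an ordinal product in which $k$ factors equal $o_\balpha$ and the rest are $\ge 1$, hence is $\ge o_\balpha^{k}$. Letting $k\to\infty$ yields $o(L(X))\ge o_\balpha^\omega$, so $o(L(X))\ge\max\{o_\balpha^\omega,o_\bbeta\}$ holds unconditionally.

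In Case~1 ($o_\bbeta<o_\balpha^\omega$, which forces $\deg(o_\bbeta)<\deg(o_\balpha^\omega)$) and in Case~2 ($o_\bbeta\ge o_\balpha^\omega$ with $o_\bbeta=\omega^{\deg(o_\bbeta)}$) the target is a power of $\omega$, hence additively principal, so it suffices to show that every block lies strictly below it; the partial sums of $o(L(X))=o_1+o_2+\cdots$ then stay below the target, giving an upper bound matching the lower bound. The degree estimate yields this at once in Case~1, where both $s$ and $e$ are below $\deg(o_\balpha)\times\omega$. Case~2 needs one more observation: the infinitude hypothesis together with $o_\bbeta=\omega^{\deg(o_\bbeta)}$ forces \emph{every} chunk of a maximal escaping language to have degree $<\deg(o_\bbeta)$, for if some chunk had degree $\deg(o_\bbeta)$ then, as infinitely many later chunks are nonempty, the chunks would sum to strictly more than $\omega^{\deg(o_\bbeta)}=o_\bbeta$, which is impossible. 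With $e<\deg(o_\bbeta)$ for every chunk, the $s+e$ estimate keeps each block below $o_\bbeta$.

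In Case~3 the answer jumps to $o_\bbeta\times\omega=\omega^{\deg(o_\bbeta)+1}$. The same estimate shows every block has degree at most $\deg(o_\bbeta)$, so the sum is at most $\omega^{\deg(o_\bbeta)+1}=o_\bbeta\times\omega$. For the matching lower bound I first exhibit a single chunk $C=u_{X'}^{M_0}v\Sigma^*\cap L(\bbeta)$ of degree exactly $\deg(o_\bbeta)$ inside an escaping language of degree $\deg(o_\bbeta)$: being outside Case~2 means either $o_\bbeta$ is not additively principal—so the chunks of the maximal escape sum to more than $\omega^{\deg(o_\bbeta)}$, which a sum of strictly smaller-degree chunks can never do—or some maximal escape has only finitely many nonempty chunks summing to an ordinal of degree $\deg(o_\bbeta)$, and a finite sum of that degree must contain a summand of that degree. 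Using $u_X^\omega=u_{(X,X')}u_{X'}^\omega$ and Lemma~\ref{lem-u0}, escaping after the prefix $u_{(X,X')}u_{X'}^{r}$ places a copy of this chunk in a block whose index grows with $r$; thus infinitely many \emph{distinct} blocks each carry a suborder of type at least $\omega^{\deg(o_\bbeta)}$, and these add up to $\omega^{\deg(o_\bbeta)}\times\omega=o_\bbeta\times\omega$. The main obstacle is exactly this last bookkeeping—tracking which block an escape lands in as a function of recursion depth and chunk power and verifying the blocks are distinct—together with the fiddly (but routine) ordinal-degree arithmetic that separates ``stays below the target'' from ``overflows by one power''; decidability of the Case-2 condition itself is supplied by Lemma~\ref{lem-transducer}.
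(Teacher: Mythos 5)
Your proposal is correct and follows essentially the same route as the paper: the same decomposition of $L(X)$ into the blocks $u_X^Nv_i\Sigma^*\cap L(X)$ and pieces of the form~(\ref{eqn-theyhavethisform}), the same degree arithmetic via Lemma~\ref{lem-product} and Theorem~\ref{thm-union}, the same pumping argument for the lower bound $o_\balpha^\omega\leq o(X)$, the same chunk analysis of the escaping languages (with Lemma~\ref{lem-transducer} for decidability), and the same ``a degree-$\delta$ chunk recurs in infinitely many blocks'' argument for the $o_\bbeta\times\omega$ lower bound. Your reorganization --- establishing the unconditional lower bound $\max\{o_\balpha^\omega,o_\bbeta\}$ up front, replacing the paper's Lemma~\ref{lem-uniodeg} sandwich by direct additive principality of the target, and unifying the paper's Subcase 2.1 with the second half of Subcase 2.2 via ``outside Case 2 iff some chunk has degree $\delta$'' --- is purely presentational and checks out.
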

\begin{proof}
So let $o_\balpha$ be the ordinal $\max\{o(L(\balpha)):~X'\to uX''\balpha\hbox{ is a production for some }X'\approx X''\approx X\}$.
	Since there are only finitely many such $\balpha$, and $\balpha\prec X$ holds for each of them, $o_\balpha$ is well-defined and computable
	by induction.
	
Also, let $o_\bbeta$ be $\max\{o(L(\bbeta)):~X'\to \bbeta\hbox{ is a production for some }X'\approx X,\bbeta\prec X\}$. This ordinal $o_\bbeta$ is well-defined and computable as well.

We also use the shorthands $\gamma=\deg(o_\balpha)$ and $\delta=\deg(o_\bbeta)$.
These ordinals are also computable
(as an ordinal ``being computable'' means in our context that the Cantor normal form of the ordinal is computable).

Now we apply a case analysis, based on $\delta$ and $\gamma$.
We note next to these (sub, subsub)cases to which case of the theorem they correspond.

\subsection*{Case 1: $\delta<\gamma\times\omega$}
This case corresponds to Case $1$ of the theorem.
We claim that in this case $o(X)={(o_\balpha)}^\omega$.
To see this, it suffices to show for each integer $N\geq 0$ that ${(o_\balpha)}^N<o(X)$ and that there is an $L_i$ with $o(L_i)<{(o_\balpha)}^N$.

For ${(o_\balpha)}^N<o(X)$, let $X'\to uX''\balpha$ be a component production with $o(L(\balpha))=o_\balpha$ and
let $u_0,v_0,u_1,v_1\in\Sigma^*$ be so that $X''\Rightarrow^* u_1X'v_1$ and $X\Rightarrow^* u_0X'v_0$.
Finally, let $w\in L(X')$. Then we have \[X\Rightarrow^*u_0(uu_1)^Nw(v_1\balpha)^Nv_0.\]
Since by Lemma~\ref{lem-product} the order type of the language generated by this sentential form is at least ${(o_\balpha)}^N$,
and this language is a subset of $L(X)$, this direction is proved.

For the other direction, note that $\deg({(o_\balpha)}^\omega)=\gamma\times\omega$.
Thus, since each $L_i$ is a finite union of languages of the form~\ref{eqn-theyhavethisform}, in which $L'\subseteq L(\bbeta)$ for some
$\bbeta$, by Lemma~\ref{lem-uniodeg} it suffices to show that \[\deg(o(u_1\ldots u_nL(\bbeta)L(\balpha_n)L(\balpha_{n-1})\ldots L(\balpha_1)))<\gamma\times\omega.\]
But, as each $o(\balpha_i)$ is at most $o_\balpha$ and $o(\bbeta)\leq o_\bbeta$, we get
that this sentential form has the order type at most ${(o_\balpha)}^n\times o_\bbeta$.

We have that $\deg({(o_\balpha)}^n\times o_\bbeta)=\gamma\times n+\delta$ which is smaller than $\gamma\times\omega$ if so is $\delta$ and the claim is proved.

\subsection*{Case 2: $\gamma\times\omega\leq \delta$}
Observe that this case applies if and only if $\deg(\gamma)<\deg(\delta)$
and that this cannot happen within Case $1$ of the theorem.
We split the analysis of this case to several subcases. For each escaping production $X'\to\bbeta$ with $\deg(o(\bbeta))=\delta$,
we decide whether there exists an $N\geq 0$ such that $u_{X'}^N\Sigma^*\cap L(\bbeta)~=~\emptyset$.
By Lemma~\ref{lem-transducer}, this is decidable.

\subsubsection*{Subcase 2.1: $\gamma\times\omega\leq\delta$ and there exists a $\bbeta$ such that $u_{X'}^N\Sigma^*\cap L(\bbeta)~=~\emptyset$ for some $N$}
This subcase rules out Case $2$ of the theorem by the condition $u_{X'}^N\Sigma^*\cap L(\bbeta)~=~\emptyset$, so this subcase falls under Case $3$ of the theorem,
and we claim $o(X)=o_\bbeta\times\omega$ in this subcase.

In this subcase, $L(\bbeta)$ is a finite union of languages
of the form $K_{N,v}~=~u_{X'}^Nv\Sigma^*~\cap~L(\bbeta)$ for some word $v=v'a<_su_{X'}$ with $v'<_p u_{X'}$
(see Figure~\ref{fig-tree}).
Thus, there is one $K_{N,v}$ among these languages with $\deg(o(K_{N,v}))=\delta$ (since the degree of this finite union is $\delta$).
Such a language is a subset of a factor $L'$ of a language of the form~(\ref{eqn-theyhavethisform}), moreover, such an $L'$ occurs as a factor in
infinitely many languages $L_i$: if $X'\Rightarrow^+ u_{X'}^tX'\boldsymbol{\alpha}$, and
$K_{N,v}$ is a subset of one of the languages $L'$ belonging to $L_i$, then it also belongs to the same factor $L'$ of $L_{i+t}$.
Hence, we have the lower bound $\omega^\delta\times\omega=\omega^{\delta+1}={o_\bbeta}\times\omega\leq o(X)$.

To see that this is an upper bound as well, 
it suffices to show that each language of the form~(\ref{eqn-theyhavethisform}) has an order type less than $o_\bbeta\times\omega$, that is, has a degree at most $\delta$.
Again, similarly to Case 1 we get that the order type of such a language is upperbounded by ${(o_\balpha)}^n\times o_\bbeta$ whose degree is $\gamma\times n+\delta$ which is $\delta$ since
the degree of $\gamma$ is smaller than the degree of $\delta$. (In this case it can happen that $o_\balpha<\omega$ but for finite powers, $\deg(\alpha^n)=\deg(\alpha)\times n$
still holds.)

Thus, in this subcase the order type of $L(X)$ is $o_\bbeta\times\omega$.

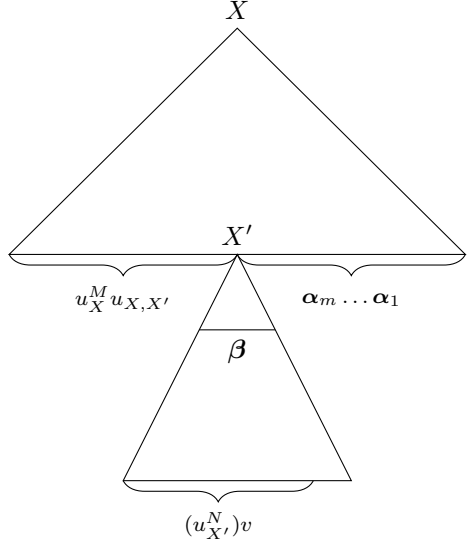
\begin{figure}\centering
	\begin{tikzpicture}
	\draw (0,10) node[anchor=south]{$X$}
	-- (3,7) node[anchor=north]{}
	-- (-3,7) node[anchor=south]{}
	-- cycle;
	\draw [decorate,decoration={brace,amplitude=8pt,mirror,raise=0pt},yshift=0pt]
	(-3,7) -- (0,7) node [black,midway,yshift=-0.6cm] {\footnotesize $u^M_Xu_{X,X'}$};
	\draw [decorate,decoration={brace,amplitude=8pt,mirror,raise=0pt},yshift=0pt]
	(0,7) -- (3,7) node [black,midway,yshift=-0.6cm] {\footnotesize $\balpha_m\ldots\balpha_1$};
	\draw (0,7) node[anchor=south]{$X'$}
	-- (1.5,4) node[anchor=north]{}
	-- (-1.5,4) node[anchor=south]{}
	-- cycle;
	\draw (-0.5,6) -- node[anchor=north]{$\bbeta$} (0.5,6);
	\draw [decorate,decoration={brace,amplitude=8pt,mirror,raise=0pt},yshift=0pt]
	(-1.5,4) -- (1,4) node [black,midway,yshift=-0.6cm] {\footnotesize $(u_{X'}^N)v$};
	
	\end{tikzpicture}
	\caption{The tree depicting the factorization of a word}
	\label{fig-tree}
\end{figure}

\subsubsection*{Subcase 2.2: $\gamma\times\omega\leq\delta$ and for all $\bbeta$ and $N$, $u_{X'}^N\Sigma^*\cap L(\bbeta)~\neq~\emptyset$}
In this subcase, the order type of each such $\bbeta$ can be written as an infinite sum of nonempty ordinals
$o_\bbeta=o_{\bbeta_1}+o_{\bbeta_2}+\ldots$, $L(\bbeta)$ being the ordered disjoint union of the nonempty languages $K_{N,v}$.
Now again, we have two subsubcases: either $o_\bbeta=\omega^\delta$
(this subsubcase corresponds to Case $2$ of the theorem)
or $o_\bbeta>\omega^\delta$ (which in turn falls under Case $3$ of the theorem as well).

{\bf If $o_\bbeta=\omega^\delta$}, then the degree of each such $o_{\bbeta_i}$ is strictly smaller than $\delta$. In this case, each language of the form~(\ref{eqn-theyhavethisform})
has an order type at most ${(o_\balpha)}^n\times o$ for some $o$ with $\deg(o)=\delta'<\delta$,
the degree of which ordinal is $\gamma\times n+\delta'$.
Since $\deg(\gamma\times n)<\deg(\delta)$, we have $\gamma\times n+\delta'<\gamma\times n+\delta=\delta$,
thus each such language $L_i$ has a degree still strictly smaller than $\delta$. Thus, $o_\bbeta=\omega^\delta$ is an upper bound for $o(X)$
in this case. Since $o(\bbeta)$ occurs as a subordering in $o(X)$, we also have $o_\bbeta\leq o(X)$, thus $o(X)=o_\bbeta$ in this subsubcase.

{\bf If $o_\bbeta>\omega^\delta$}, then there exists an $o_{\bbeta_i}$ with degree $\delta$. Proceeding with the argument exactly as in Subcase 2.1, we get that
$o(L)=o_\bbeta\times\omega$ in this subsubcase.
\end{proof}

Thus in particular, as each condition is decidable if the order types $o(\bbeta)$ and $o(\balpha)$ are computable, which are, applying the induction hypothesis,
we get decidability:
\begin{theorem}
\label{thm-recursive}
	Assume $G$ is an ordinal grammar in normal form and $X$ is a recursive nonterminal.
	
	Then $o(X)$ is computable.
\end{theorem}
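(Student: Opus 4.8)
The plan is to prove Theorem~\ref{thm-recursive} by \emph{structural induction on the height} of the recursive nonterminal $X$, exactly as set up just before Theorem~\ref{thm-computable-recursive-themingeszishere}. The base case is trivial: terminals $a\in\Sigma$ have height $0$ and $o(L(a))=1$. For the inductive step, I assume that $o(Y)$ is computable for every symbol $Y\prec X$ (equivalently, every symbol of strictly smaller height), and I must produce the Cantor normal form of $o(X)$ for a recursive $X$. The whole point is that Theorem~\ref{thm-computable-recursive-themingeszishere} already computes $o(X)$ \emph{as a closed-form expression} in the two auxiliary ordinals $o_\balpha$ and $o_\bbeta$, so the task reduces to checking that these two quantities, and the case distinctions that select among the three formulas, are themselves effectively computable.

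The key steps are therefore as follows. First I would observe that $o_\balpha=\max\{o(L(\balpha)):X'\to uX''\balpha,\ X'\approx X''\approx X\}$ and $o_\bbeta=\max\{o(L(\bbeta)):X'\to\bbeta,\ X'\approx X,\ \bbeta\prec X\}$ range over \emph{finitely many} productions of $G$, and that each individual $o(L(\balpha))$, $o(L(\bbeta))$ is a product of order types $o(X_i)$ with $X_i\prec X$ by Lemma~\ref{lem-product}. By the induction hypothesis every such $o(X_i)$ is computable, so each product is computable (products of Cantor normal forms are), and the maximum over finitely many computable ordinals is computable; hence $o_\balpha$ and $o_\bbeta$, and thus $\gamma=\deg(o_\balpha)$ and $\delta=\deg(o_\bbeta)$, are all computable. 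Second, I would verify that the guard of each of the three cases of Theorem~\ref{thm-computable-recursive-themingeszishere} is decidable: comparing $o_\bbeta$ with $(o_\balpha)^\omega$ (Case~1) is a comparison of two computable ordinals and hence decidable; testing $o_\bbeta=\omega^{\deg(o_\bbeta)}$ (Case~2) is a syntactic check on the already-computed Cantor normal form of $o_\bbeta$; and the remaining condition of Case~2, namely that $u_{X'}^N\Sigma^*\cap L(\bbeta)\neq\emptyset$ for infinitely many $N$, is decidable for each of the finitely many relevant escaping productions by Lemma~\ref{lem-transducer} (its negation is exactly the emptiness-for-some-$N$ condition that lemma decides). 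Finally, in each case the theorem outputs one of $(o_\balpha)^\omega$, $o_\bbeta$, or $o_\bbeta\times\omega$, each of which is computable from the computable $o_\balpha,o_\bbeta$ by the arithmetic identities for $+,\times$ and exponentiation on Cantor normal forms recalled in Section~2.

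Putting these together, the induction closes: given the computable $o_\balpha$ and $o_\bbeta$, I evaluate the (decidable) guards in order, read off the corresponding formula from Theorem~\ref{thm-computable-recursive-themingeszishere}, and emit its Cantor normal form, which is $o(X)$. The main obstacle is not in the final assembly—which is essentially bookkeeping—but in ensuring the induction is genuinely well-founded and that every ingredient feeding $o_\balpha$ and $o_\bbeta$ really has strictly smaller height. This rests on Lemma~\ref{lem-normal-form}: in a component production $X'\to uX''\balpha$ the sentential form $\balpha$ satisfies $\balpha\prec X$, and in an escaping production $X'\to\bbeta$ we have $\bbeta\prec X$ by definition, so in both cases the factors are symbols of strictly smaller height to which the induction hypothesis applies. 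Once this is pinned down, the theorem follows immediately, since Theorem~\ref{thm-computable-recursive-themingeszishere} has already done the hard analytic work of identifying which of the three expressions equals $o(X)$.
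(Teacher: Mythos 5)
Your proposal is correct and takes essentially the same route as the paper: the paper's own proof of Theorem~\ref{thm-recursive} is a one-line observation that, by induction on height, $o_\balpha$ and $o_\bbeta$ are computable and each guard in Theorem~\ref{thm-computable-recursive-themingeszishere} is decidable (via Lemma~\ref{lem-transducer} for the intersection condition), so the closed-form output can be read off. Your write-up merely spells out the same bookkeeping in more detail.
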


\subsection{The order type of nonrecursive nonterminals}
Recall that if $G$ is an ordinal grammar in normal form, then its only nonrecursive nonterminal can be its starting symbol $S$. Thus, if $\balpha_1,\ldots,\balpha_n$ are all the alternatives of $S$, then $L(G)=\mathop\bigcup\limits_{i=1}^nL(\balpha_i)$ and all the $\balpha_i$s consist of terminal symbols and recursive
nonterminals, whose order type is already known to be computable.

Hence we only have to show that the following problem is computable:
\begin{itemize}
	\item {\bf Input:} An ordinal grammar $G=(N,\Sigma,P,S)$ (in normal form), and a finite set $\{\balpha_1,\ldots,\balpha_n\}$ of sentential forms
	such that for each symbol $X$ occurring in the set, $o(X)$ is known.
	\item {\bf Output:} The order type of $L=\mathop\bigcup\limits_{i=1}^nL(\balpha_i)$.
\end{itemize}
We claim that the following algorithm $A$ solves this problem:
\begin{lstlisting}[mathescape=true, style=myScalastyle]
function $A(\{\balpha_1,\ldots,\balpha_n\})$
  if( $n$ == $0$ ) return $0$
  $\Right$ := $\{\balpha_1,\ldots,\balpha_n\}$
  $\Left$  := $\emptyset$
  $u$ := $\varepsilon$
  while( true ) {
    $w$ := $\max\{\bigvee L(\balpha) :\balpha\in\mathrm{Right} \}$
    $\Right_1$ := $\{\balpha\in\Right:~\bigvee L(\balpha)<w\}$
    $\Right_2$ := $\{\balpha\in\Right:~\bigvee L(\balpha)=w\}$
    $o$ := $\max\{o(L(\balpha)):\balpha\in\mathrm{Right}_2\}$
    if( $o=\omega^\gamma$ for some $\gamma$ )
      Let $w'$ be a finite prefix of $w$ such that for each $\balpha\in\Right_1$, $L(\balpha)<w'$ already holds.
      return $A(\mathrm{Left})+A\Bigl(\Right_1\cup\bigl\{(\balpha^{<{w'}}):\balpha\in\Right_2\bigr\}\Bigr)~+~\omega^\gamma$
    Let $a$ be the largest letter of $\Sigma$ such that there exists some $a\balpha\in\mathrm{Right}$
    $\mathrm{Left}$ := $\mathrm{Left}\cup\{u\cdot\balpha:\balpha\in\mathrm{Right},~\mathrm{First}(\balpha)\neq a\}$
    $\mathrm{Right}$ := $a^{-1}\mathrm{Right}$
    $u$ := $u\cdot a$
    $\mathrm{Right}$ := $\{\boldsymbol{\delta}\balpha':\exists X\to\boldsymbol{\delta}\in P,X\balpha'\in\mathrm{Right}\}~\cup~\{\balpha:\alpha\in\mathrm{Right},~\mathrm{First}(\balpha)\notin N\}$. 
  }
\end{lstlisting}
In the above algorithm, for a sentential form $\balpha=X\cdot\balpha'$, $\mathrm{First}(\balpha)=X$ and $\mathrm{First}(\varepsilon)=\varepsilon$.

We use induction on $o(L)$ to show that the algorithm always terminates, and it does so with the right answer.
Since $G$ is in normal form, we can restrict the proof to those cases when each $\balpha_i$ is either $\varepsilon$ or starts with a terminal symbol.

If this order type is $0$, then (since each nonterminal is productive as $G$ is in normal form) $n=0$ has to hold,
in which case the algorithm indeed returns $0$. Now assume $o(L)>0$, thus $n>0$.

For the sake of convenience, let $L(\mathrm{Left})$ stand for the language $\bigcup_{\bbeta\in\mathrm{Left}}L(\bbeta)$
and similarly for $L(\Right)$.
We claim that the following invariants are preserved in the loop of the algorithm: 
\begin{align*}
L(\Left)&< u&
&\hbox{and}&
L&=L(\Left)~\cup~u\cdot L(\Right).
\end{align*}
Also, $\Right\neq\emptyset$ and after each execution of Line 7, $u\cdot w=\bigvee L$ .

Upon entering the loop, $\Left=\emptyset$ and from $u=\varepsilon$ we have $u\cdot L(\Right)=L(\Right)=L$.
Within the loop, if $L=L(\Left)\cup u\cdot L(\Right)$ and $L(\Left)< u$ before executing Line 7,
then $\bigvee L~=~\bigvee \bigl(u\cdot L(\Right)\bigr)~=~u\cdot \bigvee L(\Right)~=~u\cdot\max\{\bigvee L(\balpha):\balpha\in\Right\}$,
thus indeed, $u\cdot w=\bigvee L$.

Now assuming $L(\Left)< u$ holds when we start an iteration of the loop, we have to see that
$L(\Left)\cup u\cdot\bigl(\bigcup L(\balpha):\balpha\in\Right,\First(\balpha)\neq a\bigr)<u\cdot a$
for the letter $a$ chosen during Line 14. The part $L(\Left)<u<u\cdot a$ is clear.
The latter part is equivalent to $L(\balpha)<a$ holds for each $\balpha\in\Right$ with $\First(\balpha)\neq a$,
which holds since if such an $\balpha$ begins with a terminal symbol $b$ then by the choice of $a$ we have $b<a$,
and if $\balpha=\varepsilon$, then also $\varepsilon<a$, showing preservation of the property $L(\Left)<u$.
It is also clear that the operation in Line 16 can't make $\Right$ empty by the choice of $a$
(also, since $\Right$ is nonempty and by assumption, each $\balpha\in\Right$ begins with a terminal symbol,
such a letter $a$ always exists: if $\Right=\{\varepsilon\}$, then $o=1=\omega^0$ and the algorithm terminates at Line 13).

Assuming $L=L(\Left)~\cup~u\cdot L(\Right)$ when starting an iteration, after executing Line 17
we have to show that $L=L(\Left)\cup\{u\cdot L(\balpha):\balpha\in\Right,\First(\balpha)\neq a\}~\cup~u\cdot a\cdot L(a^{-1}\Right)$
for the original values of $\Left$ and $\Right$ to see preservance of this property.
But this clearly holds for arbitrary set of sentential forms $\Left$ and $\Right$, thus this property is
again a loop invariant.

After executing Line 16, it may happen that $\Right$ contains some sentential form(s) starting with a nonterminal;
executing Line 18 does not change $L(\Right)$ but restores the property of $\Right$ that each $\balpha\in\Right$
begins with a terminal symbol (or $\balpha=\varepsilon$).

Now by the first two properties we have $o(L)~=~o(L(\Left))+o(u\cdot L(\Right))~=~o(L(\Left))+o(L(\Right))$.

We show that this is exactly the ordinal we return in Line 13, should the condition of Line 11 hold.
Consider the sets $\Right_1$ and $\Right_2$ of sentential forms.
By the definition of $w$, $\Right_2$ is nonempty and $\Right=\Right_1\uplus\Right_2$.
By the choice of $w'$, we have that $L(\Right_1)<w'$ and of course $L(\Right_2)=L({\Right_2}^{<w'})+L({\Right_2}^{\geq w'})$,
thus
\[L~=L(\Left)+\Bigl(L(\Right_1)\cup L({\Right_2}^{<w'})\Bigr)+L({\Right_2}^{\geq w'}).\]
Observe that such a $w'$ is computable as $w$ is a computable word (possibly having the form $xy^\omega$ for some
finite words $x,y$ by Corollary~\ref{cor-sup-computable}), so its prefixes can be enumerated and for each prefix $w_0$, the emptiness of the 
context-free language $L(\balpha^{\geq w_0})$ can be decided for each $\balpha\in\Right_1$; as for these strings $\balpha$
we have $\bigvee L(\balpha)<w$, there is a finite prefix $w_0$ of $w$ with $L(\balpha)$ being already smaller than $w_0$.
Thus, even the shortest such prefix $w'$ of $w$ can be computed.

Since $L({\Right_2}^{\geq w'})$ is nonempty (as $w'<w=\bigvee L(\Right_2)$) we get that $o(L(\Left))$ and
$o\Bigl(L(\Right_1)\cup L({\Right_2}^{<w'})\Bigr)$ are both strictly smaller than $o(L)$,
thus applying the induction hypothesis we get that the algorithm terminates with a correct answer in Line 13
if $o(L({\Right_2}^{\geq w'}))=\omega^\gamma$. Since each nonempty suffix of $\omega^\gamma$ is itself,
and $\omega^\gamma$ is the order type of at least one $L(\balpha)$ with $\balpha\in\Right_2$ by the choice of $o$,
we have $o(\balpha^{\geq w'})=\omega^\gamma$, thus $\omega^\gamma\leq o(L({\Right_2}^{\geq w'}))$.
For the lower bound, note that $L({\Right_2}^{\geq w'})$ is a finite union of languages $L_i$ such that for each $i$,
$\bigvee L_i=w$ and $o(L_i)\leq \omega^\gamma$.
If $\gamma=0$, then all these languages are singletons containing the word $w$ and the claim holds.
Otherwise, none of the languages $L_i$ have a largest element and so
for any word $v\in L({\Right_2}^{\geq w'})$ we have $o({L_i}^{< v})<\omega^\gamma$ 
(by that $v<\bigvee L({\Right_2}^{\geq w'})=\bigvee L_i=w$ and so ${L_i}^{\geq v}$ is nonempty)
and so $L({\Right_2}^{<v})$ is a finite union of languages, each having an order type strictly less than $\omega^\gamma$,
thus the union itself also has an order type less than $\omega^\gamma$. So each prefix of $o(L(\Right_2))$ is
less than $\omega^\gamma$ which makes $o(L(\Right_2))\leq \omega^\gamma$ and the claim holds.

Thus, if the algorithm makes a recursive call in Line 13, then it returns with a correct answer.

We still have to show that the algorithm eventually terminates. To see this, observe that $u\cdot w=\bigvee L$ holds
after each iteration of the loop and $u$ gets longer by one letter in each iteration. Hence, if the algorithm
does not terminate, then the supremum of the values of the variable $u$ is $\bigvee L$. Since $o(L)\neq 0$,
say $o(L)=\omega^{\gamma_1}\times n_1+\ldots +\omega^{\gamma_k}\times n_k$ for some integer $k>0$,
integer coefficients $n_i>0$ and ordinals $\gamma_1>\gamma_2>\ldots>\gamma_k$, so there exists some word
$x\in L$ with $o(L^{\geq x})=\omega^{\gamma_k}$. Clearly, after some finite number (say, $|x|$) of iterations
we have $x<u$, this makes $o(L^{\geq u})=\omega^{\gamma_k}$, and by $u\cdot L(\Right)$ being a nonempty
suffix of $L^{\geq u}$, we get that $o(L(\Right))=\omega^{\gamma_k}$: as $L(\Right_2)\subseteq L(\Right)$
is a finite union of languages, we have $o(L(\balpha))\leq \omega^{\gamma_k}$ for each $\balpha\in\Right_2$
and equality holds for at least one of them. Hence, the loop terminates in at most $|x|$ steps,
finishing the proof of termination as well.

Theorem~\ref{thm-recursive}, in conjunction with the correctness of the above algorithm yields the main result of the paper:
\begin{theorem}
\label{thm-main}
Given an ordinal grammar $G$, one can compute the order type $o(G)$ in Cantor normal form.
\end{theorem}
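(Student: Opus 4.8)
The plan is to assemble the two principal technical results already in hand—Theorem~\ref{thm-recursive} for recursive nonterminals and the correctness of Algorithm $A$ for finite unions—after a preprocessing step that brings an arbitrary ordinal grammar into normal form. First I would dispose of the trivial case: it is decidable whether $L(G)$ is finite, and if it is, then $o(G)=|L(G)|$ is computable directly, so I may assume $L(G)$ is infinite. I would then apply the computable sequence of transformations described in Section~3 to replace $G$ by an order-equivalent ordinal grammar $G'=(N,\Sigma,P,S)$ in normal form; since each of those transformations is argued there to preserve the generated language up to order-isomorphism, we have $o(G)=o(G')$, and it suffices to compute $o(L(S))$ for the normal-form grammar.

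Second, using Theorem~\ref{thm-recursive} I would compute the Cantor normal form of $o(X)$ for every recursive nonterminal $X$ of $G'$, processing the nonterminals in order of increasing height (equivalently, along $\prec$). A recursive nonterminal of a given height refers, through its escaping and component productions, only to symbols of strictly smaller height, whose order types are already available; thus the hypotheses of Theorem~\ref{thm-recursive} are satisfied at each stage and it returns $o(X)$.

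Third, I would handle the start symbol. By the normal form, every nonterminal other than $S$ is recursive. If $S$ itself is recursive, then $o(G)=o(S)$ is already computable by the previous step. Otherwise $S$ is the unique nonrecursive nonterminal; let $\balpha_1,\ldots,\balpha_n$ be all of its alternatives, so that $L(S)=\bigcup_{i=1}^n L(\balpha_i)$ and every symbol occurring in any $\balpha_i$ is either a terminal or a recursive nonterminal, hence of known and computable order type. This is exactly the input format demanded by Algorithm $A$; feeding it the set $\{\balpha_1,\ldots,\balpha_n\}$ and appealing to the termination-and-correctness argument established above yields the Cantor normal form of $o(L(S))=o(G)$.

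The genuine content of the theorem resides entirely in the two ingredients being combined, so in this final assembly only bookkeeping points need care: verifying that the normal-form transformation really preserves $o(G)$ (and not merely well-orderedness or the individual component languages), and confirming that Algorithm $A$'s precondition—that $o(X)$ be known for each symbol $X$ appearing in its argument—is met, which is precisely what the recursive-nonterminal step guarantees. I expect no genuine obstacle at this level; the difficulty of the paper is concentrated in Theorem~\ref{thm-computable-recursive-themingeszishere} and in the loop-invariant analysis of Algorithm $A$, both of which this proof takes as given.
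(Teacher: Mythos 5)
Your proposal is correct and follows essentially the same route as the paper: reduce to a normal-form grammar, compute $o(X)$ for recursive nonterminals bottom-up along $\prec$ via Theorem~\ref{thm-recursive}, and then apply Algorithm $A$ to the alternatives of the (only possibly nonrecursive) start symbol $S$. The paper's own proof of Theorem~\ref{thm-main} is exactly this one-line assembly of the two ingredients, with the preprocessing and the case split on whether $S$ is recursive handled earlier in Section~3 just as you describe.
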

Applying the construction of~\cite{DBLP:journals/fuin/BloomE10}, we get the following corollary:
\begin{corollary}
\label{cor-main}
The Cantor normal form of an algebraic ordinal, given by a finite system of fixed point equations, is
effectively computable. Thus, the isomorphism problem of algebraic ordinals is decidable.
\end{corollary}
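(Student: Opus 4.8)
The plan is to obtain the corollary as a straightforward composition of two effective procedures, and then to read off decidability of the isomorphism problem from the uniqueness of Cantor normal forms. The substantive mathematical content has already been established by Theorem~\ref{thm-main}; what remains is only to connect the fixed point presentation of an algebraic ordinal to an ordinal grammar.

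First I would invoke the transformation of~\cite{DBLP:journals/fuin/BloomE10}, recalled in the introduction: from any finite system $E$ of fixed point equations over ordinal-valued variables one can effectively construct an ordinal grammar $G$ such that $o(L(G))$ equals the first component of the least solution of $E$, i.e.\ exactly the algebraic ordinal defined by $E$. This is the sole external ingredient, and its effectiveness is the content of that reference. Feeding the resulting $G$ into Theorem~\ref{thm-main} then yields the Cantor normal form of $o(L(G))$, and chaining the two algorithms produces the Cantor normal form of the algebraic ordinal defined by $E$. This settles the first assertion.

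For the isomorphism problem I would use the classical fact that two well-orderings are isomorphic precisely when they denote the same ordinal, together with the uniqueness of the Cantor normal form. To decide whether the algebraic ordinals presented by two systems $E_1$ and $E_2$ are isomorphic, I would compute both Cantor normal forms by the procedure above and compare them syntactically. Since every algebraic ordinal is below $\omega^{\omega^\omega}$, the leading exponent of such a normal form is itself an ordinal below $\omega^\omega$, whose own exponents are natural numbers; hence each normal form is a finite expression of bounded nesting depth, and the comparison is a terminating recursive check on two finite objects. Equality of the two forms is then exactly isomorphism of the two ordinals.

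I do not expect a genuine obstacle in this final step, as all the difficult work is carried by Theorem~\ref{thm-main} and by the cited construction, and the corollary is merely their composition. The only points requiring care are verifying that the transformation of~\cite{DBLP:journals/fuin/BloomE10} is effective (which is exactly what that reference provides) and noting that comparison of Cantor normal forms terminates, which holds because the nesting depth of exponents in the normal form of an ordinal below $\omega^{\omega^\omega}$ is bounded.
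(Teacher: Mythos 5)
Your proposal is correct and matches the paper's argument exactly: the paper derives the corollary by composing the effective transformation of~\cite{DBLP:journals/fuin/BloomE10} (fixed point system to ordinal grammar) with Theorem~\ref{thm-main}, and decidability of isomorphism follows by syntactic comparison of the resulting Cantor normal forms. Your additional remarks on the bounded nesting depth of normal forms below $\omega^{\omega^\omega}$ are a harmless elaboration of a point the paper leaves implicit.
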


\section{Conclusion and acknowledgement}
We have shown that the isomorphism problem of algebraic ordinals is decidable, studying the order types
of well-ordered context-free languages, given by an ordinal grammar.
It is an interesting question whether the proof can be lifted to scattered linear orders: in many cases,
scattered linear orders behave almost as well as well-orders. Also, it would be interesting to
analyze the runtime of our algorithm: we only know that by well-founded induction the computation
eventually terminates.

The authors wish to thank Prof. Zolt\'an F\"ul\"op for the discussion on generalized sequential
mappings on context-free languages.

\bibliography{biblio}{}
\bibliographystyle{plain}
\end{document}